\documentclass[
  a4paper,
  twocolumn,
  11pt,
  unpublished
]{quantumarticle}
\pdfoutput=1

\usepackage[utf8]{inputenc}
\usepackage[english]{babel}
\usepackage[T1]{fontenc}
\usepackage{amsmath}
\usepackage{amssymb}       
\usepackage{mathtools}     
\usepackage{nicefrac}
\newcommand{\defeq}{\mathrel{\mathop:}=} 
\usepackage{bm}            
\usepackage{amsthm}

\usepackage{graphicx}
\usepackage[dvipsnames]{xcolor}

\usepackage{physics}

\newcommand{\sharpP}{$\#\mathsf{P}$}
\newcommand{\Poly}{$\mathsf{P}$}
\newcommand{\NP}{$\mathsf{NP}$}
\newcommand{\BQP}{\mathsf{BQP}}
\newcommand{\BPP}{\mathsf{BPP}}

\usepackage[numbers,sort&compress]{natbib}

\usepackage{hyperref}
\usepackage[nameinlink,noabbrev]{cleveref}

\definecolor{violet}{HTML}{53257F}     
\definecolor{green}{HTML}{257a7f}      
\definecolor{blue}{HTML}{254280}
\definecolor{orange}{HTML}{ED6F1C}
\definecolor{brightgreen}{HTML}{C2ED1C}
\definecolor{fuchsia}{rgb}{1.0, 0.0, 1.0}

\hypersetup{
  colorlinks=true,
  linkcolor=violet,
  citecolor=violet,
  urlcolor=violet,
  allbordercolors=violet      
}

\makeatletter
\newcommand{\subscripts}[3]{%
  \@mathmeasure\z@\displaystyle{#2}%
  \global\setbox\@ne\vbox to\ht\z@{}\dp\@ne\dp\z@
  \setbox\tw@\box\@ne
  \@mathmeasure4\displaystyle{\copy\tw@_{#1}}%
  \@mathmeasure6\displaystyle{{#2}_{#3}}%
  \dimen@-\wd6 \advance\dimen@\wd4 \advance\dimen@\wd\z@
  \hbox to\dimen@{}\mathop{\kern-\dimen@\box4\box6}%
}
\makeatother

\newcommand{\supp}{\textup{\text{supp}}}
\newcommand{\se}{\subseteq}

\newcommand{\gs}{\geqslant}
\newcommand{\sm}{\setminus}

\newtheorem{theorem}{Theorem}
\newtheorem{lemma}{Lemma}
\newtheorem{proposition}{Proposition}

\newtheorem{definition}{Definition}
\newtheorem{remark}{Remark}
\newtheorem{conjecture}{Conjecture}
\newtheorem{corollary}{Corollary}
\newtheorem{result}{Result}

\newtheorem*{theorem*}{Theorem}
\newtheorem*{corollary*}{Corollary}
\newtheorem*{proposition*}{Proposition}

\begin{document}

\title{The Structure of Circle Graph States}

\author{Frederik Hahn}
\email{mail@frederikhahn.eu}
\affiliation{Technical University of Berlin, Electrical Engineering and Computer Science Department, Berlin, Germany}

\author{Rose McCarty}
\affiliation{Georgia Institute of Technology, School of Mathematics and School of Computer Science, USA}

\author{Hendrik Poulsen Nautrup}
\affiliation{University of Innsbruck, Department of Theoretical Physics, Technikerstra\ss e 21a, A-6020 Innsbruck, Austria}

\author{Nathan Claudet}
\affiliation{University of Innsbruck, Department of Theoretical Physics, Technikerstra\ss e 21a, A-6020 Innsbruck, Austria}

\begin{abstract}
Circle graph states are a structurally important family of graph states.
The family's entanglement is a priori high enough to allow for universal measurement-based quantum computation (MBQC); however, MBQC on circle graph states is actually efficiently classically simulable.
In this work, we paint a detailed picture of the local equivalence of circle graph states.
First, we consider the class of all graph states that are local unitary (LU)-equivalent to circle graph states.
In graph-theoretical terms, this LU-equivalence class is the set of all graphs reachable from the family of circle graphs by applying $r$-local complementations.
We prove that the only graph states that are LU-equivalent to circle graph states are circle graph states themselves: circle graphs are closed under $r$-local complementation.
Second, we show that bipartite circle graph states, i.e., 2-colorable circle graph states, are in one-to-one correspondence with
planar code states, on which MBQC is known to be efficiently classically simulable.
Leveraging this correspondence, we present alternative, simple proofs that (1) if a planar code state is LU-equivalent to a stabilizer state, they are in fact local Clifford (LC)-equivalent to it and that (2) MBQC on all circle graph states is efficiently classically simulable.
Third and finally, we demonstrate that the problem of counting the number of graph states LU-equivalent to a given graph state is \sharpP-hard.
\end{abstract}

\maketitle

\section{Introduction}

While it is believed that quantum computers can solve some problems faster than classical computers, the source of this potential speedup is still far from well understood. It is known, however, that we need at least two ingredients: quantum entanglement, and non-Clifford operations \cite{gottesman1998heisenberg, jozsa2003role, nielsen2010quantum}.

In measurement-based quantum computation (MBQC) these two sources of quantum computational power are clearly separated.
The key idea behind MBQC is to prepare an entangled resource using only Clifford operations, and then to perform a computation using only single-qubit measurements, along with classical feedback and control \cite{Raussendorf2001,raussendorf2003measurementbased,nautrup2024measurement}.
During this measurement-based computation, no further entanglement is added to the initial resource.
A central aspect of MBQC is to determine which resource families allow for universal quantum computation, that is, which families provide as much computational power as any quantum computer~\cite{VandenNest2007}. To answer this question research has drawn upon and cross-pollinated various fields of study, from condensed-matter physics~\cite{raussendorf2019computationally,stephen2019subsystem} to graph theory~\cite{hein2004multiparty,mhalla2012graph}. The first known universal resource for MBQC was the cluster state~\cite{briegel2001}, where qubits are placed on nodes of a two-dimensional grid and Ising-type interaction are applied along edges. Its natural generalization are graph states, where qubits are placed on vertices of arbitrary graphs and Ising-type interactions are applied along edges~\cite{hein2004multiparty,schlingemann2001quantum}.

As of today, there is no known systematic way of telling which families of graph states allow for universal MBQC, or when MBQC is efficiently classically simulable on such a family.
However, what is known is that to allow for universal MBQC, the entanglement of a resource family needs to be complex enough.
More precisely, the family's measures of entanglement such as the  entanglement width and Schmidt rank-width~\cite{van2006universal, shi2006classical}, should grow quickly with the number of qubits.
For graph states, both these entanglement measures coincide and correspond to the well-known graph-theoretic rank-width (see~\cite{rankWidthSurvey} for a survey on rank-width).
If the rank-width of a family of graphs grows at most logarithmically with the number of graph vertices $n$, MBQC on the corresponding graph states is efficiently classically simulable~\cite{vandennestClassicalSimulationUniversality2007}.

For example, graph states whose underlying graphs are  distance-hereditary (i.e.~graphs of rank-width at most 1 such as trees) do not allow for universal MBQC, whereas graph states corresponding to the $n\times n$ grid are universal MBQC resources (with their rank-width growing as $n-1$, see \cite{jelinekRankwidthSquareGrid2010}).

Graph states arising from the so-called \textit{circle graphs} --which we call \textit{circle graph states} in this work-- are a priori good candidates for MBQC universality.
Indeed for circle graphs the rank-width grows faster than logarithmically.
We prove in Appendix~\ref{app:rank_width_of_circle_graphs} that $n \times n$ comparability grids, a special type of circle graph, have rank-width at least $n/4$.

Circle graphs and comparability grids arise naturally as an obstruction to having small rank-width. Informally, a graph state has ``large'' rank-width if and only if a ``large'' comparability grid graph state can be obtained from it using single-qubit local operations and classical communication (LOCC)~\cite{geelenGridTheoremVertexminors2023,transformingStatesOther, vandennestGraphicalDescriptionAction2004}.

These single-qubit operations are so-called \textit{local} operations and do not change universality or simulability of families of quantum states.
Thus, a statement about the universality or simulability of family of quantum states, holds for the entire orbit of the family under local operations.

Remarkably, the orbits of graph states under reversible local operations can be neatly described using graphical transformations of the underlying graphs.  For local Clifford operations, a subset of local unitary operations, the corresponding graph transformations are \textit{local complementations}~\cite{vandennestGraphicalDescriptionAction2004, heinEntanglementGraphStates2006}.
For arbitrary local unitary operations, the corresponding graph transformations referred to as \textit{$r$-local complementations}~\cite{claudet2024local}.

In this work, we characterize the transformation of circle graph states under arbitrary reversible local unitary operations (that is, the transformation of circle graphs with respect to the generalized $r$-local complementation) and illustrate their connection to planar code states. This gives another reason why MBQC on circle graph states is classically simulable~\cite{harrison2025fermionicinsightsmeasurementbasedquantum}.

\subsection{Results}

The family of circle graph states is known to be closed under local Clifford (LC) operations, since circle graphs are closed under local complementations.
This means that all graph states that are LC-equivalent to circle graph states are circle graph states themselves.
We show that this extends to local unitary (LU)-equivalence.

\begin{result}[see \Cref{thm:circle_graphs_closed}]
    For circle graph states $\text{LU} = \text{LC}$.
\end{result}

The only graph states that are LU-equivalent to circle graph states are circle graph states themselves.
In other words, circle graphs are closed under $r$-local complementation.

We proceed to demonstrate that bipartite circle graph states correspond to the well-known CSS states defined on a planar graph, the so-called planar code states.

\begin{result}[see \Cref{thm:planar_code_states_are_bipartite_circle_graph_states}]
    Every planar code state is LC-equivalent to a bipartite circle graph state. Conversely, every bipartite circle graph state is LC-equivalent to a planar code state.
\end{result}

Several significant results from the theory of graph states can be directly derived as corollaries of our findings, either recovering or even improving upon them.

For example, the above theorems imply that every stabilizer state that is LU-equivalent to a planar code state is also LC-equivalent to it.

\begin{result}[see \Cref{cor:planar_code_states_LU_LC}]
    For planar code states $\text{LU} = \text{LC}$.
\end{result}

Note that the above result was already known for planar code states under some additional restrictions \cite{Sarvepalli2010PRA}.

We further show  that every circle graph state can be obtained from some bipartite circle graph state with a qubit number that is at most quadratically larger.

\begin{result}[see \Cref{prop:vertex_minor_of_bipartite_circle_graph}]
  Every circle graph is a vertex-minor of some bipartite circle graph with a quadratic overhead.
\end{result}

Since circle graph states can be obtained from bipartite circle graph states with only a polynomial overhead, and it is also known that MBQC on planar code states is efficiently classically simulable~\cite{Bravyi2007},
we can conclude that MBQC on circle graph states is efficiently classically simulable.

\begin{result}[see \Cref{cor:MBQC_ECS_on_circle_graph_states}]
    MBQC on circle graph states is efficiently classically simulable.
\end{result}

Note that the above result was already known through a connection with fermionic Gaussian states~\cite{harrison2025fermionicinsightsmeasurementbasedquantum}.

The polynomial rank-width scaling of graphs is a prerequisite for the efficient universality of the corresponding graph state resources~\cite{vandennestClassicalSimulationUniversality2007}.
Despite being efficiently classically simulable, we find that circle graphs have polynomial rank-width.

\begin{result}
[see \Cref{corollary:circle_graphs_rank-width}]
There exist infinitely  many $n$-vertex circle graphs with rank-width $\Omega(\sqrt{n})$.
\end{result}

Since circle graphs have at least polynomial rank-width, we then recover the fact that while polynomial rank-width is a necessary condition for efficient universality, it is not a sufficient one~\cite{Bravyi2007}.

\begin{result}[see \Cref{cor:rank-width_sufficient_condition}]
Assuming $\BPP \neq \BQP$, polynomial rank-width is not a sufficient condition for efficient universality of a graph state resource for MBQC.
\end{result}

Finally, the problem of counting the number of graph states LC-equivalent to a given graph state is \sharpP-complete, even when restricted to circle graph states~\cite{dahlberg2020counting}.
We can generalize this statement to encompass general LU-equivalence.

\begin{result}[see \Cref{cor:counting}]
    Counting the number of graph states LU-equivalent to a given graph state is \sharpP-hard, even when restricted to circle graph states.
\end{result}

In the following section, we provide more context for our findings, suggest future research directions, and highlight open problems that we believe warrant further exploration.

\subsection{Discussion}

\subsubsection{Graph states representation of general surface codes}

In this work, we consider the correspondence of bipartite circle graphs to planar code states, that is, to the codewords of the Kitaev surface code \cite{bravyi1998quantumcodeslatticeboundary, Kitaev2003} defined on a sphere.
In general, it is not clear what graph states correspond to codewords of surface code states, where the surface has a non-zero genus.
However, in the special case where the surface code is defined on a grid embedded in a torus, the graph state corresponding to the codewords is known \cite{Liao2021}.
We leave the characterization of (bipartite) graph states corresponding to general surface codes for future work.
Such a characterization may lead to new insights into the classical simulability of graph states beyond circle graphs, by relating them to higher-genus surface-code states, for which MBQC simulation costs scale polynomially in the system size but (in general) exponentially in the genus~\cite{Goff2012}.

\subsubsection{Closure of circle graph states under SLOCC}

As circle graph states are known to be closed under taking vertex-minors, if a graph state is obtained from a circle graph state by local Clifford operations, local Pauli measurements and classical communication (LC+LPM+CC), then it is a circle graph state itself~\cite{bouchetCircleGraphObstructions1994, dahlbergComplexityVertexminorProblem2022}.
In this work, we do prove that if a graph state is obtained from a circle graph state by any reversible local unitary transformation, then it is a circle graph state.
However, it remains unknown whether any graph state obtained from a circle graph state through the more general stochastic local operations and classical communication (SLOCC) will necessarily be a circle graph state.
A result like this would be an even stronger statement about the universality, or lack thereof, of circle graphs.

More precisely, MBQC on circle graph states being efficiently classically simulable implies that, under the assumption that $\BQP \neq \BPP$ (i.e. quantum computers are more powerful than classical probabilistic computers), circle graph states are not efficiently universal. Informally, a family of quantum states  is efficiently universal if any quantum state that can be created efficiently with a quantum circuit, can be created efficiently by LOCC from a state in the family with polynomial overhead in the number of qubits \cite{VandenNest2007}. Alternatively, a family is efficiently universal if any grid graph state can be prepared efficiently by LOCC from a state in the family with a polynomial overhead in the number of qubits.

As is, MBQC on circle graph states being efficiently classically simulable does not rule out that circle graph states are universal, in the sense of classical-quantum (CQ) universality \cite{VandenNest2007}. A family of quantum states is universal, or CQ-universal, if any quantum state can be obtained by LOCC from a state in the family (without the concern of efficiency).
We leave open the question of whether circle graph states are (CQ) universal.
However, we remark that the opposite would be directly implied by circle graph states being closed under SLOCC (and thus LOCC).

\subsubsection{Unifying the proofs of efficient simulation}

The now two known proofs that MBQC on circle graph states is efficiently classically simulable stem from two distinct interpretation of circle graphs.
First, bipartite circle graphs are the fundamental graphs of planar graphs, which is why they are connected to planar code states (see \Cref{sec:planar_code_states}).
Second, circle graphs can be interpreted  as Eulerian tours of 4-regular multigraphs, leading to the connection with fermionic Gaussian states \cite{harrison2025fermionicinsightsmeasurementbasedquantum}.
Finding unifying principles that explain when quantum systems are efficiently classically simulable is of interest (see, for example, \cite{bauer2026quadratictensorsunificationclifford}),
and circle graph states may constitute a key toy example.

\subsubsection{Geometric measure of entanglement of circle graph states}

As mentioned in the introduction and shown in Appendix \ref{app:rank_width_of_circle_graphs}, the fact that MBQC on circle graphs is efficiently classically simulable is not due to a lack of entanglement.
However, while a lack of sufficient entanglement forbids universality for MBQC, it is also the case that too high of a so-called \textit{geometric measure of entanglement} forbids universal quantum computation.

This geometric measure has been used to prove that almost all quantum states are not universal resources for MBQC \cite{Gross2009}.
Indeed, when the geometric measure of entanglement is too high (that is, when the deviation of the geometric measure from $n$ is logarithmic in $n$, where $n$ is the number of qubits), then MBQC can essentially be simulated with coin flips.

The geometric measure of entanglement of almost all graph states is not high enough to apply the same argument \cite{Ghosh2025}; indeed, the deviation of the geometric measure from $n$ of a random graph state is polynomial in $n$ with very high probability. This is related to the fact that quantum networks with very strong properties arise from random graph states \cite{ascoli2026graphsvertexminoruniversal} and random bipartite graph states \cite{Cautres2024}.

It may be insightful to investigate the geometric measure of entanglement of circle graph states.
For example, the geometric measure of entanglement of circle graph states being too high would yield yet another proof that circle graph states are not universal resources for MBQC.

\subsubsection{Circle graphs in graph theory}

Circle graphs are structurally important in graph theory and feature in a number of open problems.

It is known that every graph of sufficiently large rank-width contains a vertex-minor isomorphic to any given circle graph~\cite{geelenGridTheoremVertexminors2023}.
Similarly, graphs of large rank-width are conjectured to contain pivot-minors isomorphic to any bipartite circle graph (\Cref{conj:bipartite_pivot_minor}; see \cite{Oum2009}).

Oum conjectures that graphs are well-quasi-ordered by the pivot-minor relation (\Cref{conj:wqo_pivot}; see Question 6 in \cite{rankWidthSurvey}) and by the vertex-minor relation (\Cref{conj:wqo_lc}; see \cite{kim2024vertex}).

Geelen conjectures that MBQC is efficiently classically simulable on any vertex-minor-closed class of graphs (\Cref{conj:simulation}; see also \cite{mccarty2021local}, Section 1.4]).

\subsubsection{Quantum communication and quantum cryptography}

Finally, our results are relevant for quantum cryptography protocols that use graph states as resources for communication in quantum networks.

Several communication protocols rely on the distribution of circle graph states for functionalities ranging from (anonymous) quantum conference key agreement using GHZ-states~\cite{murtaQuantumConferenceKey2020, PRXQuantum.1.020325, PRXQuantum.3.040306} or linear cluster states~\cite{deJong2023anonymousconference},
over repeater graph states \cite{takou2025optimization, azuma2015all}
to quantum secret sharing~\cite{hillery1999quantum}.

Since repeater graph states, linear cluster states and GHZ states are all circle graph states, Theorem~\ref{thm:circle_graphs_closed} implies that for all the above communication protocols, the LU-equivalence of the resource states reduces to LC-equivalence, and state convertibility becomes efficiently decidable.

\section{Background and Definitions}

\subsection{Graph theory}

A \textit{graph} $G = (V(G),E(G))$ is composed of two sets, a set $V(G)$ of vertices, and a set $E(G)$ of edges connecting two vertices each, i.e., a subset of $\{\{u,v\} : u,v \in V(G)\}$.

Here, we consider undirected and finite graphs, so edges do not have a preferred direction and $V(G)$ is a finite set. Unless otherwise specified, the graphs are simple, meaning there is at most one edge between two distinct vertices, and no edge connects a vertex to itself.

For some constructions, we make use of \textit{multigraphs}, i.e.~of graphs that may contain loops and multiple edges between the same pair of vertices. A multigraph $M$ consists of a vertex set $V(M)$ and an edge set $E(M)$, where $E(M)$ is a multiset over the set of unordered pairs $\{\{u,v\} : u,v \in V(M)\}$.

We use the notation $u \sim_G v$ when $(u,v) \in E(G)$, and we say $u$ and $v$ are \textit{adjacent} or \textit{neighboring}.

Notice that $u \sim_G v \iff v \sim_G u$ as $G$ is undirected.

Given a vertex $u \in V(G)$, $N_G(u) = \{v\in V(G) ~|~ u \sim_G v\}$ is the \textit{neighborhood} of $u$, i.e., the set of vertices adjacent to $u$.
Two vertices $u$ and $v$ that share the same common neighborhood, $N_v\setminus\{u\}=N_u\setminus\{v\}$, are said to be \textit{twins}.

A graph $G$ is called \textit{bipartite} if its vertex set can be split into two disjoint sets $A$ and $B$ such that every edge of $G$ connects a vertex in $A$ to a vertex in $B$.
In simpler terms, the vertices of a bipartite graph may be colored into two colors such that no edge connects vertices of the same color.
Simple graphs are bipartite if and only if they do not contain a cycle of odd length. (See the violet box of \Cref{fig:circle_graphs_as_vertex_minors_of_bipartite_circle_graphs} for an example of a graph that is not bipartite, and the orange box of the same figure for one that is.)

\subsection{Graph states}

Graph states are quantum states that are in one-to-one correspondence with graphs. Given some graph $G$, the corresponding graph state is constructed by first creating a qubit in the state $\ket + = \frac{1}{\sqrt 2}\left(\ket 0 + \ket 1\right)$ for each vertex, then applying a controlled-$Z$-gate $CZ$ for each edge. The $CZ$-gate is defined as $CZ \ket{x_1}\ket{x_2} = (-1)^{x_1 x_2} \ket{x_1}\ket{x_2}$ where $x_1, x_2 \in \{0,1\}$.

\begin{definition}[Graph state]\label{def:graph_state}
    Let $G$ be a graph of order $n$. The corresponding graph state $\ket G$ is the $n$-qubit state: $$\ket G = \left[\prod_{\{u,v\} \in E(G)} CZ_{uv}\right] \ket{+}_{V(G)}$$
\end{definition}

Graph states are a subfamily of the stabilizer states. An $n$-qubit stabilizer state is the unique fixpoint (i.e.,~the unique +1 eigenvector), up to a global phase, of $n$ independent commuting Pauli operators of the form $\pm \bigotimes_{u\in V} P_u$ where $P_u \in \{I,X,Y,Z\}$.

Note that the stabilizers of a stabilizer state (i.e.,~the Pauli operators which the stabilizer state is a fixpoint of) cannot contain $-I$, because it has no +1 eigenvectors. The stabilizers of a graph state are given by its connectivity, more precisely, for every vertex $u$ of a graph, the corresponding graph state is stabilized by $X_u Z_{N_G(u)}$, i.e.,~$X_u Z_{N_G(u)} \ket G = \ket G$.

To define local Clifford equivalence of graph states, we introduce the Pauli group and the (local) Clifford group.

The \textit{Pauli group} $P_{u}$
is a finite subgroup of order $4^{n+1}$ in the unitary group $\mathcal{U}\left(2^{n}\right)$. On $n$ qubits
$P_{u}$ it is defined as
    \begin{equation}\label{eq:pauli_group_multi_qubit}
        \mathcal{P}_{n}\defeq\left\langle\{\pm 1, \pm i\} \cdot\left\{P_{1} \otimes P_{2} \otimes \cdots \otimes P_{n}\right\}\right\rangle,
    \end{equation}
where the tensor factors $P_{u}$ are arbitrary Pauli matrices.

The normalizer of the Pauli group $\mathcal{P}_{n}$
in $\mathcal{U}\left(2^{n}\right)$
is called the \textit{Clifford group} $\mathcal{C}_{n}$.
As the elements of the normalizer of a subgroup
commute with the entire group as a set,
the Pauli group is invariant under
conjugation with Clifford group elements.
\begin{definition}[Clifford group]\label{def:Clifford_group}
    The $n$ qubit Clifford group $\mathcal{C}_{n}$
    is the group of unitary matrices
    $U \in \mathcal{U}\left(2^{n}\right)$ satisfying
    $U \mathcal{P}_{n} U^{\dagger}$ $=\mathcal{P}_{n}$.
\end{definition}

This allows us to define the
\textit{local Clifford group}, $\mathcal{C}_{n}^{l}$.

\begin{definition}[Local Clifford group]\label{def:local_clifford_group}
    The local Clifford group $\mathcal{C}_{n}^{l}$
    is the subgroup of $\mathcal{C}_{n}$
    that contains all $n$-fold tensor products
    of elements in $\mathcal{C}_{1}$.
\end{definition}

Not only is every graph state a stabilizer state, but graph states are a representative family of stabilizer states. Indeed, every stabilizer state is local Clifford (LC) equivalent to some graph state. LC-equivalence is now properly defined below.

\begin{definition}[LC-equivalence]\label{def:lc_equivalence}
    Two $n$ qubit quantum states $\ket{\psi_1}$ and $\ket{\psi_2}$ are LC-equivalent if there exist $\bigotimes_{u\in V}C_u \in \mathcal{C}^{l}_{n}$ such that $\ket{\psi_2} = e^{i\phi} \bigotimes_{u\in V}C_u \ket{\psi_1}$.
\end{definition}

Quantum states can be equivalent in more general ways than by LC-equivalence.
For example, two quantum states can be equivalent by stochastic local operations and classical communication (SLOCC), by local operations and classical communication (LOCC), or by local unitary operations (LU).
For graph states (and thus stabilizer states), all these notions of local equivalence coincide \cite{Verstraete2003, heinEntanglementGraphStates2006}, thus LU-equivalence, which we define properly below, is the most general notion of local equivalence for graph states.

\begin{definition}[LU-equivalence]\label{def:lu_equivalence}
    Two quantum states $\ket{\psi_1}$ and $\ket{\psi_2}$ are LU-equivalent if there exist single-qubit unitary operators $U_u$ such that $\ket{\psi_2} = \bigotimes_{u\in V}U_u \ket{\psi_1}$.
\end{definition}

We say that $\text{LU} = \text{LC}$ for a stabilizer state $\ket \psi$, if any stabilizer state LU-equivalent to $\ket \psi$, is also LC-equivalent to $\ket \psi$. We say that $\text{LU} = \text{LC}$ for a family of stabilizer states, if $\text{LU} = \text{LC}$  for any element of the family. While it was once conjectured that $\text{LU} = \text{LC}$ for all graph states \cite{PhysRevA.71.062323} (which is the same as saying that $\text{LU} = \text{LC}$ for all stabilizer states), pairs of graph states that are LU-equivalent but not LC-equivalent exist \cite{ji2008lulcconjecturefalse, tsimakuridzeGraphStatesLocal2017}. The smallest known example has 27 qubits, and it is known that $\text{LU} = \text{LC}$ for graph states with 19 qubits or less \cite{claudet2025deciding}.

\subsection{Local complementation and r-local complementation} \label{subsec:rlc}

\begin{figure}[htbp]
  \centering
  \includegraphics[width=0.8\columnwidth]{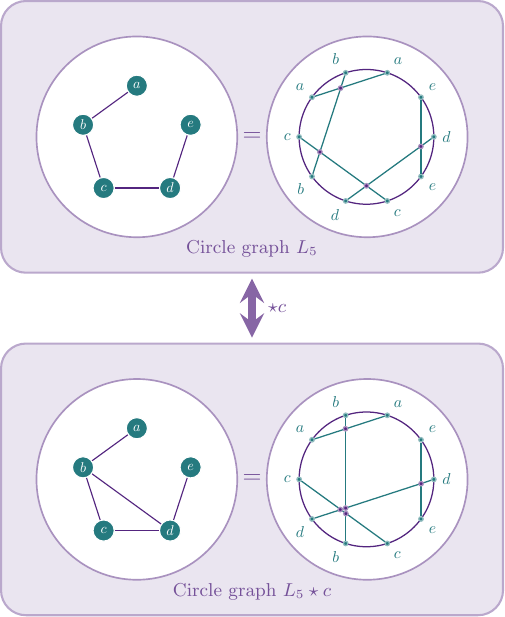}
  \caption{Circle graphs $L_5$ and $L_5 \star c$ and their chord representations.}
  \label{fig:circle_graph_examples}
\end{figure}

Orbits of graph states under reversible local operations can neatly be described in terms of graphical transformations of the underlying graphs.
For LC-operations, the corresponding graph transformations are local complementations.
For arbitrary LU-operations, the corresponding graph transformations are the more general $r$-local complementations.

\textit{Local complementation} with respect to a vertex $u$ of $G$ complements the subgraph induced by the neighborhood of $u$:
Formally, this creates the graph $G\star u= G\Delta K_{N_G(u)}$, where $\Delta$ denotes the symmetric difference and $K_{N_G(u)}$ is the complete graph on the neighborhood of $u$.
(Given two sets $A, B$, the symmetric difference of $A$ and $B$ is $A \Delta B \defeq (A \cup B) \sm (A \cap B)$.)
See \Cref{fig:circle_graph_examples} for an example.

Notably, local complementations on graphs capture all local Clifford transformations of the corresponding graph states \cite{vandennestGraphicalDescriptionAction2004, heinEntanglementGraphStates2006}, i.e.,~two graph states are LC-equivalent if and only if the corresponding graphs are related by a sequence of local complementations.

The alternating sequence of three local complementations with respect to the vertices of an edge $(u,v)$ of a graph $G$, i.e.,
 $G \star  u \star  v \star  u  = G \star  v \star  u \star  v, $
is known as \textit{pivoting} or edge-local complementation.

Local complementations can be generalized to $r$-local complementations \cite{claudet2024local}.
These $r$-local complementations then capture all local unitary transformations of graph states: Two graph states are LU-equivalent if and only if the corresponding graphs are related by a sequence of $r$-local complementations.

To define $r$-local complementation, we first introduce \textit{independent} (multi-) sets and the notion of \textit{$r$-incidence}.

A set of vertices $D \se V(G)$ is said to be \textit{independent} if no two vertices of $D$ are adjacent. In other words, for each $u \in D$, $N_G(u) \cap D = \emptyset$.

With a slight abuse of notation we identify multisets $S \in V(G)^{\mathbb N}$ of vertices with their multiplicity function $S:V(G)\to \mathbb N$. The \textit{support} $\supp(S)$ of a multiset $S$ is then the set of vertices $u$ with $S(u) \gs 1$.
A multiset is said independent if its support is independent.

With this, we are ready to define both \textit{$r$-incidence} and \textit{$r$-local complementation}.

\begin{definition}[$r$-Incidence \cite{claudet2024local}]\label{def:rincidence}
Given a graph $G$, a multiset $S$ of vertices is  $r$-incident, if for any $k\in [0,r)$, and any $K\subseteq V(G)\setminus \supp(S)$ of size $k+2$, $\sum_{u \in \bigcap_{v\in K}N_G(v)}S(u)$ is a multiple of $2^{r-k-\delta(k)}$,
where $\delta$ is the Kronecker delta ($\delta(x)\in \{0,1\}$ and $\delta(x)=1 \Leftrightarrow x=0$) and $\sum_{u \in \bigcap_{v\in K}N_G(v)}S(u)$ is the number of vertices of $S$, counted with their multiplicity, that are neighbors to all vertices of $K$.
\end{definition}

\begin{definition}[$r$-Local Complementation \cite{claudet2024local}]\label{def:rLC}
Given a graph $G$ and an $r$-incident independent multiset $S$, the graph $G\star^r S$ has an edge $a\sim_{G\star^r S} b$ if and only if
\[\left[a\sim_{G} b ~~\oplus~~ \!\!\!\!\!\!\!\sum_{u \in N_G(a) \cap N_G(b)}\!\!\!\!\!\!\! S(u) = 2^{r-1}\bmod 2^{r}\right].\]
\end{definition}

For $r=1$, we recover regular local complementation. Specifically, if $K = \{u_1,u_2, \cdots, u_k\}$ is an independent set, then $G \star^1 K = G \ \star u_1 \star u_2 \star \cdots \star u_k$.

We say that $G\star^r S$ is valid if the multiset $S$ is independent and $r$-incident.
This notion of validity is important, because only valid $r$-local complementations correspond to local unitary transformations in the graph state picture \cite{claudet2024local}. Refer to \cite{claudet2025localequivalencesgraphstates} (Section 4) for an accessible introduction to $r$-local complementation.

\subsection{Vertex-minors and pivot-minors}
In graph theory, local complementation and pivoting are studied in the form of vertex-minors and pivot-minors.
These concepts relate directly to the transformation of quantum graph states under reversible gates and non-reversible measurements.

\begin{definition}[Vertex-minor]
    A graph $H$ is a vertex-minor of a graph $G$ if $H$ can be obtained from $G$ by local complementations and vertex deletions.
\end{definition}

By name, vertex-minors first appeared in \cite{RWandVM}, but the concept was previously known as an $l$-reduction \cite{bouchetCircleGraphObstructions1994}. Note that here we consider vertex-minors of graphs with a fixed labeling and not isomorphism classes of graphs.
The operational interpretation of vertex-minors for quantum graph states is as follows:

\begin{proposition}[see Proposition 5 in \cite{Cautres2024}]
    A graph $H$ is a vertex-minor of a graph $G$ if and only if $\ket{G}$ transforms into $\ket{H}$ via destructive Pauli measurements, local Clifford gates, and classical communication.
\end{proposition}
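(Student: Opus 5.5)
This proposition follows by translating each side of the equivalence into stabilizer-state language and using the LC characterization by local complementation, which the paper has already recalled.

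\emph{Forward direction.} Suppose $H$ is obtained from $G$ by local complementations and vertex deletions. I first reorder the sequence so that all local complementations come before all deletions. This is allowed because local complementation at a vertex $u$ of $G - w$ equals $(G\star u) - w$ whenever $u\neq w$. The reordered sequence has the form $H = (G\star u_1\star\cdots\star u_k) - W$.

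The local complementations are realised by a local Clifford unitary taking $\ket{G}$ to $\ket{G\star u_1\star\cdots\star u_k}$, by \cite{vandennestGraphicalDescriptionAction2004, heinEntanglementGraphStates2006}. Deleting a vertex $w$ is realised by measuring qubit $w$ in the $Z$ basis. With outcome $0$ the post-measurement state on the remaining qubits is $\ket{G-w}$. With outcome $1$ it is $Z_{N(w)}\ket{G-w}$, so the controller communicates the outcome and applies the Pauli (hence Clifford) correction $Z_{N(w)}$. Doing this for each $w\in W$ gives $\ket{H}$.

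\emph{Converse.} Suppose $\ket{G}$ is transformed into $\ket{H}$ by a protocol of destructive Pauli measurements, local Clifford gates and classical communication. Any local Clifford gate can be pushed past later measurements, because conjugating a Pauli measurement by a Clifford gives another Pauli measurement. Hence the protocol can be normalised to three stages: one local Clifford, then destructive Pauli measurements on a set $W$, then an outcome-dependent local Clifford.

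Next I use the standard graphical rules for Pauli measurements on graph states. Measuring vertex $w$ gives, up to a local Clifford on the remaining qubits:
\begin{itemize}
\item $G-w$ for a $Z$ measurement;
\item $(G\star w)-w$ for a $Y$ measurement;
\item $(G\star v\star w\star v)-w$ for an $X$ measurement, where $v$ is a neighbor of $w$ (or $G-w$ if $w$ is isolated).
\end{itemize}
At each step the state is an LC-image of a graph state of a vertex-minor of $G$. An initial local Clifford only changes $G$ by local complementations, so the final state $\ket{H}$ is LC-equivalent to $\ket{H'}$ for some vertex-minor $H'$ of $G$. By the LC characterization, $H$ is then obtained from $H'$ by local complementations, so $H$ is itself a vertex-minor of $G$.

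\emph{Main obstacle.} The delicate step is the normalisation of an adaptive protocol. Later measurement bases and gates may depend on earlier outcomes, so I have to argue branch by branch and check that every branch yields a graph state of a vertex-minor. This works because the measured set and the graphical rules are deterministic up to Clifford corrections within each branch.
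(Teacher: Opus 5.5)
Your proof is correct. The paper gives no argument of its own for this proposition. It defers to Proposition~5 of \cite{Cautres2024}, which extends Theorem~2.6 of \cite{transformingStatesOther} (stated only for targets without isolated vertices) to destructive measurements. What you write out is the standard self-contained proof behind these citations:
\begin{itemize}
\item a $Z$-measurement plus a $Z_{N(w)}$ correction realises each vertex deletion;
\item in the converse, the Pauli-measurement rules of \cite{hein2004multiparty} ($Z\mapsto G-w$, $Y\mapsto (G\star w)-w$, $X\mapsto (G\star v\star w\star v)-w$) are combined with the characterisation of LC-equivalence by local complementation \cite{vandennestGraphicalDescriptionAction2004}.
\end{itemize}
Writing it out buys one thing the citation hides: it shows why destructiveness removes the isolated-vertex caveat. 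Every measured qubit leaves the system, so your final step compares two graph states on the same vertex set $V(G)\setminus W$. A measured qubit can never be re-prepared as an isolated $\ket{+}$, which is exactly how $\overline{K_2}$ arises from $K_2$ in the non-destructive setting.

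Two small tidy-ups. First, the normalisation step is unnecessary, and it is what creates your ``main obstacle.'' Instead, maintain along each branch the invariant that the current state is $L\ket{G'}$, with $L$ a local Clifford and $G'$ a vertex-minor of $G$ on the unmeasured qubits. A Clifford gate only changes $L$. Measuring a Pauli $P$ on qubit $w$ of $L\ket{G'}$ is the same as measuring the Pauli $\pm L_w^\dagger P L_w$ on $\ket{G'}$, after which the measurement rules restore the invariant. Adaptivity then plays no role. Second, this invariant also replaces your sentence ``an initial local Clifford only changes $G$ by local complementations.'' Taken literally, that sentence is inaccurate, since $C\ket{G}$ is generally not a graph state.
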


While this characterization was originally proven strictly for target graphs without isolated vertices (see Theorem 2.6 in \cite{transformingStatesOther}), we adopt the generalized formulation of Ref.~\cite{Cautres2024}, which naturally accommodates isolated vertices through the use of destructive Pauli measurements.

\textit{Destructive} measurements emphasize that the measured qubit is no longer considered part of the system.
Here, it is useful to refer to destructive measurements in order to obtain an equivalence between these two notions.
For example, the 2-vertex empty graph $\overline{K_2}$ is not a vertex-minor of the 2-vertex complete graph $K_2$; however, the graph state corresponding to $\overline{K_2}$ can be obtained from the graph state corresponding to $K_2$ with non-destructive $Z$-measurements.

The pivot-minor relation is analogous to the vertex-minor relation, but local complementation is replaced by pivoting.

\begin{definition}[Pivot-minor]
    A graph $H$ is a pivot-minor of a graph $G$ if $H$ can be obtained from $G$ by pivotings and vertex deletions.
\end{definition}

The pivot-minor relation is stronger than the vertex-minor relation, in the sense that if $H$ is a pivot-minor of $G$, then $H$ is also vertex-minor of $G$. The converse is however not true (for example, the 3-cycle is a vertex-minor but not a pivot-minor of the 4-cycle). Similar to vertex-minors, pivot-minors capture local Clifford operations and Pauli measurements on quantum graph states, but with a more restricted set of operations.

\begin{proposition}
    A graph $H$ is a pivot-minor of a graph $G$ if and only if $\ket{G}$ transforms into $\ket{H}$ via destructive $X$- or $Z$-measurements, Hadamard gates, Pauli gates, and classical communication.
\end{proposition}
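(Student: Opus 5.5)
We prove the two directions separately. Both rest on the standard translation between Pauli measurements and local Cliffords on graph states and graph operations. We use the vertex-minor proposition above (Proposition 5 of \cite{Cautres2024}), together with the well-known rules: a $Z$-measurement on $v$ deletes $v$; an $X$-measurement on $v$ with a neighbor $w$ corresponds to pivoting on $vw$ followed by deletion of $v$, up to Pauli and Hadamard corrections. The key restriction to keep in mind is that pivoting $G\star u\star v\star u$ can be realized by Hadamard gates on $u,v$ together with Pauli corrections. Indeed, the local Clifford $\sqrt{-iX}$ factors arising from the three local complementations combine, on the vertices other than $u,v$, into Pauli operators (the $\sqrt{iZ}$ factors on common neighbours cancel in pairs or become $Z$). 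On $u,v$ themselves they combine into a Hadamard up to Paulis. This is the standard computation of \cite{vandennestGraphicalDescriptionAction2004}, and we would cite or verify it explicitly.

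For the ``only if'' direction, suppose $H$ is obtained from $G$ by a sequence of pivotings and vertex deletions. First, each pivoting is implemented by Hadamards on its two endpoints and Pauli gates, as just described. Second, vertex deletions can be moved to the end of the sequence, since pivoting commutes with deleting a vertex not involved in the pivot. Third, each deletion is realized by a destructive $Z$-measurement. Such a measurement yields $\ket{G-v}$ up to $Z$ corrections on $N(v)$ depending on the outcome, and these corrections are handled by classical communication and Pauli gates.

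For the ``if'' direction, we take an arbitrary protocol using the allowed operations and show that, after absorbing Paulis, every intermediate state is a graph state of a pivot-minor of $G$, conjugated by some Hadamards. We track the state as $\bigotimes_{u\in T} H_u\, P\,\ket{G'}$, where $P$ is a Pauli, $G'$ is a pivot-minor of $G$ and $T$ is a set of vertices. Hadamard and Pauli gates preserve this form. A destructive $X$- or $Z$-measurement on $v$ is equivalent to an $X$- or $Z$-measurement on $\ket{G'}$, swapped according to whether $v\in T$. A $Z$-measurement deletes $v$. An $X$-measurement gives $G'\wedge vw - v$ up to local Cliffords; these are again Hadamards on $v,w$ and Paulis, so $w$ is toggled in $T$. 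If $v$ is isolated, the measurement simply deletes it. This preserves the invariant.

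At the end, the state is $\ket H$ and also equals $\bigotimes_{T}H\,P\,\ket{G'}$. It remains to show that $H$ is a pivot-minor of $G'$, and hence of $G$. \textbf{This is the main obstacle.} We must show that if two graph states are related by Hadamards on a set $T$ and Paulis, then the graphs are related by pivotings. The plan is to compare stabilizer groups. The Hadamards on $T$ exchange $X$ and $Z$ on $T$, so $T$ together with $V\setminus T$ gives an $X/Z$ pattern. Viewing the stabilizer as the row space of $[\,I\mid A\,]$, applying $H_T$ swaps columns. Requiring the result to be of graph form $[\,I\mid B\,]$ means that the principal submatrix $A[T]$ is invertible over $\mathbb F_2$, and $B$ is then the pivot $A*T$. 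Finally we invoke the theorem of \cite{Oum2009}/Bouchet that a pivot on a set $T$ with $A[T]$ nonsingular is a composition of single-edge pivotings. This yields $H=G'\wedge\cdots$, completing the proof.
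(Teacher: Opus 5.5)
Your proposal is correct and follows the same route as the paper. Pivoting is realized by Hadamards on the pivot edge plus Paulis, and deletion by a $Z$-measurement; conversely, $Z$- and $X$-measurements become deletions and pivot-then-delete up to Hadamards and Paulis. Everything then reduces to the fact that graph states related by Hadamards and Paulis have graphs related by pivotings.

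The paper simply cites that last fact (proof of Proposition~6 in \cite{claudet2024local}). Your argument for it is a valid self-contained proof: $A[T]$ must be nonsingular, $B=A*T$, and $A*T$ decomposes into single-edge pivots by the standard result. Your invariant $\bigotimes_{u\in T}H_u\,P\,\ket{G'}$ also makes explicit how Hadamards interleave with the measurements, which the paper leaves implicit.
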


\begin{proof}
    ($\rightarrow$) Pivoting can be implemented with Hadamard and $Z$-gates \cite{mhalla2012graph}, and vertex deletion can be implemented with a $Z$-measurement, along with possible Pauli corrections depending on the outcome of the measurement \cite{hein2004multiparty}.

    ($\leftarrow$) $Z$-measurement corresponds to vertex-deletion, up to some possible Pauli corrections. $X$-measurement on a (non-isolated) qubit $a$ corresponds to choosing a vertex $b$ adjacent to $a$, applying a pivoting on the edge $ab$, then deleting $a$, up to some possible Pauli corrections and a Hadamard gate on $b$ \cite{hein2004multiparty}. After the measurements, the graph state obtained is equivalent to $H$ up to Hadamard gates and Pauli gates. Thus, these two graphs a related by pivotings (see for example proof of Proposition 6 in \cite{claudet2024local}).
\end{proof}

Both vertex-minors and pivot-minors feature in important structural conjectures by Oum in graph theory.

\begin{conjecture}[Oum; Question 6 in \cite{rankWidthSurvey}] \label{conj:wqo_pivot}
    Graphs are well-quasi-ordered by the pivot-minor relation.
\end{conjecture}
In other words the conjecture asks if every infinite sequence of graphs $G_1, G_2,\ldots$ contains a pair $G_i, G_j$ (with $i < j$) such that $G_i$ is isomomorphic to a pivot-minor of $G_j$.
For bipartite graphs, the conjecture is already resolved: Bipartite graphs are well-quasi-ordered by the pivot-minor relation~\cite{geelen2014solving}.

A weaker conjecture, implied by the previous one, is asking for the well-quasi-ordering by the vertex-minor relation instead:

\begin{conjecture}[\cite{kim2024vertex}] \label{conj:wqo_lc}
    Graphs are well-quasi-ordered by the vertex-minor relation.
\end{conjecture}

A family of graphs known to be well-quasi-ordered by the vertex-minor relation is the family of circle graphs (implied by the well-quasi-ordering of 4-regular graphs via an immersion relation of Robertson and Seymour \cite{robertson2010graph, kim2024vertex}).

\subsection{Circle graphs}\label{subsec:circle_graphs}

\begin{figure*}[htbp]
\centering
  \includegraphics[width=\textwidth]{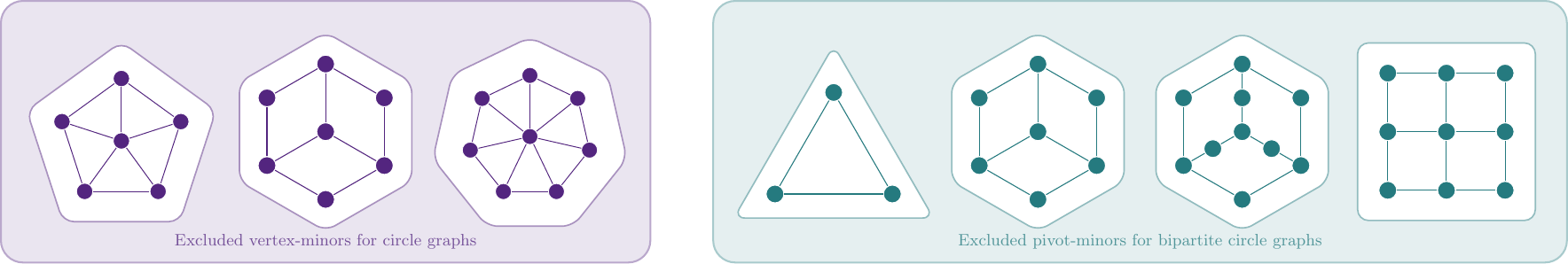}
\caption{
Circle graphs and bipartite circle graphs can both be characterized by excluded minors:
\textcolor{violet}{A simple graph is a circle graph if and only if it does not have a vertex-minor isomorphic to any of the three graphs depicted in the violet box \cite{bouchetCircleGraphObstructions1994}.}
\textcolor{green}{A simple graph is a bipartite circle graph if and only if it does not have a pivot-minor isomorphic to any of the four graphs depicted in the green box.}
Note that all obstructions that characterize all circle graphs (bipartite or not) via excluded pivot-minors can be found in~\cite{geelen2009circle}.
}
\label{fig:excluded_minors}
\end{figure*}

Circle graphs are defined as follows, and examples are illustrated in \Cref{fig:circle_graph_examples}.

\begin{definition}[Circle graph]\label{def:circle_graph}
A graph $G$ is a circle graph if it is isomorphic to the intersection graph of a finite number of chords of a circle.
\end{definition}

Circle graphs are closed under local complementation and vertex deletion, in other words, they are closed by the vertex-minor relation.
They can be elegantly represented in terms of forbidden vertex-minors.

\begin{proposition}
    A graph is a circle graph if and only if none of its vertex-minors is isomorphic to a graph in \Cref{fig:excluded_minors}, left.
\end{proposition}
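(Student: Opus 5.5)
This is Bouchet's theorem: a graph is a circle graph if and only if it has no vertex-minor isomorphic to $W_5$, $W_7$ or $BW_3$. The plan is to treat the two directions very differently. One is routine; the other is the deep content of Bouchet's work, and I would essentially follow his strategy.

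\emph{Necessity.} Circle graphs are closed under vertex deletion, since we just delete a chord. They are also closed under local complementation. If $u$ is a chord, its endpoints split the circle into two arcs. Reversing the order of the chord endpoints on one of these arcs toggles exactly the crossings between pairs of neighbours of $u$, which gives a chord diagram of $G\star u$. So every vertex-minor of a circle graph is a circle graph. It then suffices to check that none of the three obstructions is a circle graph. For $W_5$, $W_7$ and $BW_3$ this is a finite check. I would argue via prime graphs: a prime circle graph has a unique chord diagram up to symmetry (Bouchet/Gabor--Supowit--Hsu). Delete the hub of the wheel, realise the remaining cycle uniquely, and observe that no chord can cross all of its chords. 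The case of $BW_3$ is handled similarly.

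\emph{Sufficiency.} Let $G$ be a minimal counterexample: not a circle graph, but with no obstruction as a vertex-minor. By minimality, every proper vertex-minor, in particular $G-v$, $G\star v-v$ and $G\wedge vw-v$, is a circle graph. First reduce to $G$ prime with respect to split decomposition. Circle graphs are closed under split composition, since the chord diagrams of the two sides can be glued along the marker chords. So a non-prime minimal counterexample would have a non-circle proper piece, and that piece is (isomorphic to) a vertex-minor of $G$, which contradicts minimality.

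For prime $G$, pick a vertex $v$ such that $G-v$ is prime; Bouchet's lemma on prime vertex-minors guarantees one exists up to local equivalence. The graphs $G-v$, $G\star v-v$ and $G\wedge vw - v$ each have an essentially unique chord diagram. We then try to insert a chord for $v$ consistent with its neighbourhood. The three minors encode compatible constraints on where the two endpoints of $v$ must go, and these are the constraints we would try to match up.

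The main obstacle is showing that when these constraints are incompatible, one can extract $W_5$, $W_7$ or $BW_3$ as a vertex-minor. I would try to run this through Bouchet's local-equivalence and Eulerian-tour ($4$-regular graph) framework, where circle graphs correspond to alternance graphs of Euler tours. In that setting, the failure of insertion becomes a planarity-type obstruction in an associated isotropic system, and a finite case analysis produces the three graphs. Realistically, at this point I would cite Bouchet's theorem \cite{bouchetCircleGraphObstructions1994}, since the paper states this characterization as background rather than proving it anew.
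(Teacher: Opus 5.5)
Your proposal matches the paper, which gives no proof of this proposition and simply quotes it from \cite{bouchetCircleGraphObstructions1994}. Your necessity argument is correct (reversing one arc of $u$ realises $G\star u$, and the obstructions are ruled out via the unique chord diagram of the prime cycle left after deleting the hub), and for sufficiency it is the citation, not your sketch, that carries the proof. In that sketch, note that primality of $G-v$ does not transfer to $G\star v-v$ or $G\wedge vw-v$: for $G=C_6\star v$ one gets $G-v\cong C_5$ but $G\star v-v\cong P_5$. So only the prime ones among these three minors come with an essentially unique chord diagram.
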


Graph classes characterized by a family of forbidden vertex-minors (such as circle graphs) are precisely the vertex-minor-closed graph classes.
Such classes feature in the Geelen simulation conjecture.

\begin{conjecture}[Geelen, see \cite{mccarty2021local}, Section 1.4]\label{conj:simulation}
    MBQC is efficiently classically simulable on any vertex-minor-closed class of graphs, excluding the class of all graphs.
\end{conjecture}

Two examples of vertex-minor-closed graph classes for which this conjecture is true are given by circle graphs \cite{harrison2025fermionicinsightsmeasurementbasedquantum}, and graphs of rank-width at most $k$, where $k$ is an arbitrary integer \cite{van2006universal}.

In this manuscript, we also consider circle graphs that are bipartite.

Bipartite circle graphs have chord diagrams whose chords are divided into two sets of chords that only intersect with chords from the other set and never with chords from the same set (see, e.g.~\Cref{fig:planar_code_states_are_bipartite_circle_graphs} on the right or \Cref{fig:circle_graphs_as_vertex_minors_of_bipartite_circle_graphs} on the bottom).
Bipartite circle graphs are closed by pivoting and vertex deletion, in other words they are closed by the pivot-minor relation.
Also, similar to circle graphs that are characterized by their excluded vertex-minors, bipartite circle graphs can be  characterized by their excluded pivot-minors.

\begin{proposition}
    A graph is a bipartite circle graph if and only if none of its pivot-minors is isomorphic to a graph in \Cref{fig:excluded_minors}, right.
\end{proposition}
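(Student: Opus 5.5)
The plan is to reduce the statement to Tutte's excluded-minor characterisation of binary matroids that are both graphic and cographic, using the dictionary between bipartite graphs and binary matroids. I expect the four graphs in the green box to be the triangle $K_3$ together with the fundamental graphs of $F_7$, $M(K_5)$ and $M(K_{3,3})$. The argument has three steps.

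\textbf{Step 1: separate bipartiteness.} Pivoting on an edge $uv$ of a bipartite graph with bipartition $(A,B)$ only toggles edges between $N(u)\setminus\{v\}\subseteq B$ and $N(v)\setminus\{u\}\subseteq A$. It then swaps the sides of $u$ and $v$. So pivoting preserves bipartiteness, and so does vertex deletion. Hence $K_3$ is never a pivot-minor of a bipartite graph.

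Conversely, let $G$ be non-bipartite. Take a shortest odd cycle $C$; it is an induced cycle $C_{2k+1}$. For $k\geq 2$, pivot on an edge $uv$ of $C$ and delete $u$ and $v$ (more precisely, delete the appropriate vertices after the pivot). This yields an induced $C_{2k-1}$. Iterating gives $K_3$ as a pivot-minor. So a graph is bipartite iff it has no pivot-minor isomorphic to $K_3$.

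\textbf{Step 2: the matroid dictionary.} A bipartite graph with a fixed bipartition $(A,B)$ is the fundamental graph of a binary matroid $M$ with basis $A$, read off from the $A\times B$ biadjacency matrix $[I\mid N]$. The key facts, all standard and due to Bouchet and Oum, are:
\begin{itemize}
\item pivoting on an edge corresponds to a basis exchange, so it gives the same matroid;
\item deleting a vertex of $B$ corresponds to matroid deletion, and deleting a vertex of $A$ corresponds to contraction;
\item swapping the roles of $A$ and $B$ gives the dual matroid.
\end{itemize}
Consequently, the pivot-minors of a connected bipartite $G$ are exactly the fundamental graphs of minors of $M$, up to taking duals.

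\textbf{Step 3: combine with the planar characterisation.} By de Fraysseix's theorem, a bipartite graph is a circle graph iff it is the fundamental graph of a planar graph, that is, iff $M$ is both graphic and cographic. By Tutte, a binary matroid is graphic and cographic iff it has no minor among $F_7$, $F_7^*$, $M(K_5)$, $M^*(K_5)$, $M(K_{3,3})$, $M^*(K_{3,3})$. A matroid and its dual have the same fundamental graph up to swapping sides, so these six matroids contribute only three bipartite graphs up to isomorphism. Transferring the excluded-minor condition through Step 2 gives the result for bipartite graphs, and Step 1 handles non-bipartite ones.

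\textbf{Main obstacle.} The hard part is the exactness of the dictionary in Step 2 when the matroid is disconnected or when the bipartition is not unique. One must check that an isomorphic copy of an obstruction graph appearing as a pivot-minor really certifies an $F_7$-type minor regardless of which side is taken as the basis. This works because the obstruction list is closed under duality, which is exactly why only three bipartite graphs are needed. Disconnected graphs I would handle componentwise, since both properties are determined by the components.
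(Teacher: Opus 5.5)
Your proof is correct, but for the circle-graph part it takes a different route from the paper.

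The bipartiteness half is the same in both: pivots and deletions preserve bipartiteness, and the triangle is a pivot-minor of every odd cycle. Your remark that a shortest odd cycle is induced makes explicit a step the paper leaves implicit.

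For the rest, the paper cites Geelen and Oum's theorem that a graph is a circle graph if and only if it has no pivot-minor among fifteen specific graphs. It adds the triangle to that list and discards the twelve non-bipartite obstructions, since each already has a triangle pivot-minor.

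You instead stay inside the bipartite world. The fundamental-graph dictionary (pivots as basis exchanges, deleting basis and non-basis vertices as contraction and deletion, swapping sides as duality) lets de Fraysseix's theorem turn bipartite circle graphs into planar binary matroids. Tutte's classical excluded-minor theorem for graphic and cographic binary matroids then finishes the job.

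What your route buys:
\begin{itemize}
\item It avoids the much deeper Geelen--Oum theorem; its restriction to bipartite graphs is exactly the statement you derive.
\item It explains why precisely three bipartite obstructions occur: the six matroid obstructions pair up under duality.
\item It identifies them as fundamental graphs of $F_7$, $M(K_5)$ and $M(K_{3,3})$. Each bipartite pivot-minor-minimal non-circle graph must be pivot-equivalent to one of these, so your excluded-pivot-minor condition agrees with whichever representatives the figure draws.
\end{itemize}

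The price is the bookkeeping you flag yourself: the side-swap ambiguity for the connected obstructions, and the componentwise reduction for disconnected graphs. You handle both correctly. The paper's argument is shorter, but it outsources all of this, including checking which of the fifteen graphs are bipartite, to the cited result.
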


\begin{proof}
    A graph is bipartite if any only if none of its pivot-minors is isomorphic to a triangle.
    (The triangle is a pivot-minor of every larger odd cycle. Conversely, a graph is bipartite if and only if it contains no odd cycle. Every pivot-minor of a bipartite graph is bipartite, since neither pivots nor vertex deletions create an odd cycle if none existed before.)

    Geleen and Oum showed that a graph is a circle graph if and only if none of its pivot-minors is isomorphic to a graph in a set of fifteen graphs~\cite{geelen2009circle}.
    Bipartite circle graphs are the intersection of bipartite graphs and circle graphs, thus their forbidden pivot-minors are those fifteen graphs together with the triangle.
    Twelve graphs in this fifteen-graph set are not bipartite, meaning they contain a triangle as a vertex-minor. Therefore, these graphs can be disregarded.
\end{proof}

Bipartite circle graphs feature in an other important conjecture by Oum that link rank-width to pivot-minors.

\begin{conjecture}[\cite{Oum2009}]\label{conj:bipartite_pivot_minor}
    For every bipartite circle graph $H$, every graph $G$ of sufficiently large rank-width contains a pivot-minor isomorphic to $H$.
\end{conjecture}

\begin{figure*}[htbp]
\centering
  \includegraphics[width=\textwidth]{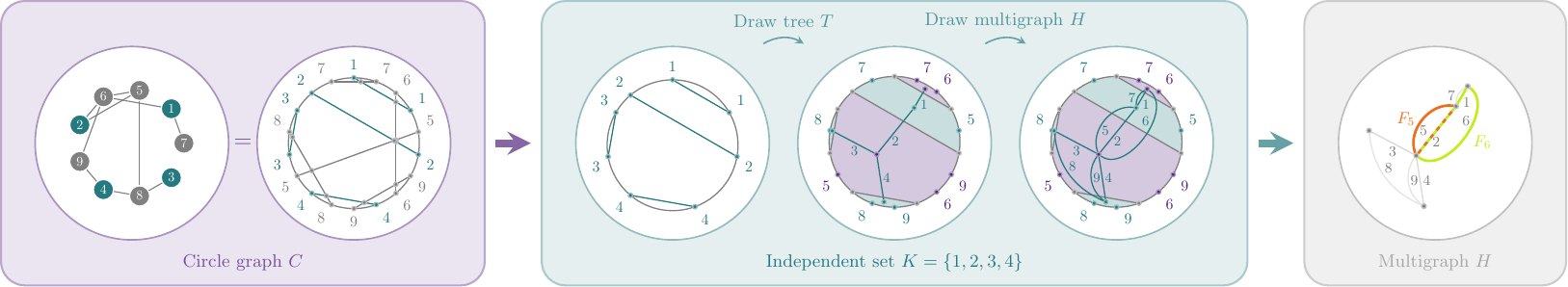}
\caption{
    For an exemplary \textcolor{violet}{$9$-vertex circle graph $C$} with a choice of \textcolor{green}{independent set $K=\{1,2,3,4\}$}, we show how to construct the corresponding tree $T$ and \textcolor{gray}{multigraph $H$}.
    \textbf{\textcolor{violet!70}{Violet arrow:}} From the chord representation of $C$, we
    remove all chords not corresponding to $K=\{1,2,3,4\}$.
    This yields a simpler diagram of four
    nonintersecting chords that divide the circle into five regions.
    In each of the five circle regions, we then draw
    a vertex and connect those pairs of vertices
    that correspond to pairs of neighboring regions
    by a tree edge. We label each of the four tree edges with the
    name of the chord that separates
    the neighboring circle regions.
    \textbf{\textcolor{green!70}{Green arrow:}}
    Adding all vertices $e\in V(G) \setminus K$
    (that lie outside the independent set $K$)
    as additional edges to this tree $T$,
    creates fundamental cycles $F_e$ in a new \textcolor{gray}{multigraph $H$}.
    As an example, the fundamental cycles $\textcolor{orange}{F_5}$ and $\textcolor{brightgreen}{F_6}$ are highlighted.
    Note that in the chord diagram of $C$, $e$ intersects
    a different chord $v$ if and only if $v$ is an edge
    of the fundamental cycle $F_e$ of $H$. In our example, $F_5$, containing edge $2$, corresponds to the $5$ and the $2$ chords intersecting the chord diagram of $C$ (and $F_6$, containing edges $1, 2$, corresponds to $6$ intersecting both $1, 2$ in the chord diagram of $C$).
}
\label{fig:multigraph_H_cycles}
\end{figure*}

To study the effect of $r$-local complementation on circle graphs, it is crucial to understand the structure of independent sets within circle graphs.
(Compare to Fig.~\ref{fig:multigraph_H_cycles} for an example.)

    For every independent set $K$
    of a given circle graph $C$,
    we can construct a tree $T$
    whose edges correspond to the vertices
    of this independent set~\cite{Fraysseix1981}.
    This tree, $T$, can then in turn be used to construct a multigraph, $H$, that is essential for our proof of Lemma~\ref{lemma:circle_no_twins} and therefore of Theorem~\ref{thm:circle_graphs_closed}.
    The construction works as follows.

    From a chord representation of $C$, we
    remove all chords not corresponding to
    our independent set $K$.
    This gives us a chord diagram of $k=|K|$
    nonintersecting chords of a circle that divide the circle into $k+1$ regions.

    We now draw the tree $T$:
    For each of the $k+1$ circle regions we draw
    a vertex and connect each pair of vertices
    corresponding to a pair of neighboring regions
    by a tree edge. We label each tree edge by the
    name of the chord that separates
    the two circle regions in the chord diagram of $K$.
    Notice that this construction results in a tree
    precisely because no two such chords intersect.

    Note further that adding any single vertex $e\in V(G) \setminus K$
    (that lies outside our independent set $K$)
    as an additional edge to this tree $T$,
    creates a single fundamental cycle in a new graph $H$ with edges $E(H)=E(T)\cup \{e\}$.
    We denote this fundamental cycle by $F_e$.
    In the chord diagram of $C$, $e$ intersects
    a different chord $v$ if and only if $v$ is an edge
    of the fundamental cycle $F_e$ of $H$.

The above construction for independent sets within circle graphs will be instrumental in proving the technical \Cref{lemma:circle_no_twins} of \Cref{sec:LU=LC_for_circle_graphs}.

\subsection{Planar code states}\label{subsec:planar_code_states}

Planar code states are a special class of stabilizer states. More precisely, they are Calderbank--Shor--Steane (CSS) states \cite{Calderbank96, Steane96}, meaning that there exists a basis for their stabilizer group such that each stabilizer contains only X, or only Z.
Like all stabilizer states, CSS states are LC-equivalent to some graph states. In particular, CSS states are known to be exactly the stabilizer states LC-equivalent to bipartite, i.e. 2-colorable, graph states \cite{chen2004multi}.

The qubits of a planar code state correspond to the edges of a planar multigraph. We allow two or more edges between two vertices (hence the denomination ``multigraph''). We also allow self-loops (i.e. edges with both ends being the same vertex), but we may ignore them as they correspond to isolated qubits.
Given a planar multigraph $P$ with $n$ edges, the corresponding $n$-qubit planar code state is defined by the following generators of its $2^n$ element stabilizer group:

\begin{itemize}
    \item $\bigotimes_{e\in \partial f} Z_e$ for every face $f$ of $P$ (including the exterior),  where $\partial f$ denotes the set of edges forming the boundary of the face $f$,
\item$\bigotimes_{e\ni v} X_e$
for every vertex $v$ of $P$, where $e\ni v$ means that the edge $e$ is incident to the vertex $v$.
\end{itemize}

Note that according to Euler's theorem for planar graphs ($|V|-|E|+|F|=2$, where $|V|$, $|E|$ and $|F|$ are the numbers of vertices, edges, and faces, respectively), the number of stabilizer generators described as above is equal to two plus the number of qubits, i.e.,~$n+2$. Therefore, two of these generators are redundant.
For an illustration of a planar code state as it is typically presented, refer to \Cref{fig:planar_code_states_are_bipartite_circle_graphs}.

\section{Circle graphs are closed under r-local complementation}\label{sec:LU=LC_for_circle_graphs}

In this section we prove that the only graph states LU-equivalent to circle graph states are circle graph states themselves. To this end, we prove that circle graph states are closed by $r$-local complementation (see \Cref{subsec:rlc}).

Informally, this is done by proving that independent $r$-incident sets, required to apply $r$-local complementation, only occur in circle graphs in a trivial form. First, we prove the following technical lemma, restricting the independent $r$-incident sets we have to consider.

\begin{lemma} \label{lemma:notwins}
    If $S$ is an independent $r$-incident multiset of a graph $G$, then there exists an independent $r$-incident multiset $S'$ such that:
    \begin{itemize}
        \item $\supp(S')$ contains no twins;
        \item $\supp(S')$ contains no vertex of degree 0 or 1;
        \item for every vertex $u$, $S'(u) \in [0, 2^{r-1}-1]$;
    \end{itemize}
     and an independent set $A$ of vertices, such that $G \star^r S = G \star^1 A \star^r S' $.
\end{lemma}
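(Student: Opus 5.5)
The plan is to apply three normalizing operations to $S$ one after another: merge twins, remove vertices of degree $0$ or $1$, and reduce multiplicities. Each operation keeps $S$ independent and $r$-incident and leaves $G\star^r S$ unchanged, except for a factor $G\star^1 A$ that we record separately. The basic observation is that Definition~\ref{def:rLC} depends on $S$ only through the sums $\sigma_S(a,b)=\sum_{u\in N_G(a)\cap N_G(b)}S(u)$ for $a,b\notin\supp(S)$. $r$-incidence depends only on the analogous sums over common neighbourhoods of sets $K\subseteq V(G)\setminus\supp(S)$. Edges inside the independent set $\supp(S)$ are never toggled, because such vertices have no common neighbours in $\supp(S)$. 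Pairs with exactly one endpoint in $\supp(S)$ are also never toggled.

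\textbf{Step 1 (twins).} Take twins $u,v\in\supp(S)$. They are non-adjacent, so $N_G(u)=N_G(v)$. Let $S_1$ be $S$ with $S_1(u)=S(u)+S(v)$ and $S_1(v)=0$. Every common-neighbourhood sum over sets $K$ disjoint from $\supp(S)$ is unchanged: $u\in\bigcap_{w\in K}N_G(w)$ if and only if $v$ is. The one subtlety is that $v$ leaves the support, so $v$ may now belong to $K$ and the condition for $r$-incidence must be checked for these new $K$. Since $v$ and $u$ have the same neighbours, a set $K$ containing $v$ has the same common neighbourhood as $K\setminus\{v\}\cup\{u\}$. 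Moreover $v$ is not adjacent to any vertex of $\supp(S_1)$, so the common neighbourhood of such a $K$ meets $\supp(S_1)$ in the empty set. The new sums are therefore $0$ and the condition holds trivially. The pairs involving $v$ receive toggles $\sigma_{S_1}(v,b)=0$, which matches the original situation in which $v$ was in the support and was never toggled. Iterating removes all twins.

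\textbf{Step 2 (degree $\le 1$).} A vertex $u\in\supp(S)$ of degree $0$ lies in no common neighbourhood of two vertices, so we simply delete it from $S$. A vertex $u$ of degree $1$ with neighbour $w$ lies in no common neighbourhood of $K$ with $|K|\ge 2$, since $u$ would need two distinct neighbours. Deleting it changes no relevant sums. The newly freed vertex $u$ again has zero sums with $\supp(S)$ by independence. Removing vertices from the support can create new twins, so Steps 1 and 2 are alternated until neither applies; this terminates because $|\supp(S)|$ strictly decreases.

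\textbf{Step 3 (multiplicities).} Write $S(u)=q_u2^{r-1}+s_u$ with $s_u\in[0,2^{r-1}-1]$. Put $A=\{u: q_u \text{ odd}\}$, which is independent, and $S'(u)=s_u$. Then $\sigma_S(a,b)\equiv \sigma_{S'}(a,b)+2^{r-1}|A\cap N(a)\cap N(b)| \pmod{2^r}$. I expect this step to be the main obstacle. Toggling by the $r$-rule for $S$ is \emph{not} simply the XOR of the $1$-rule for $A$ and the $r$-rule for $S'$ on $G$: the condition ``$=2^{r-1}\bmod 2^r$'' is not additive. The way around this is to exploit $r$-incidence, with $k=0$ giving that every pair sum is divisible by $2^{r-1}$. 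I will define $S'$ relative to $G\star^1 A$, noting that $G\star^1 A$ does not change the neighbourhoods of vertices in $A$, nor any edge incident to the independent set $\supp(S)\supseteq A$. Hence common neighbourhoods of outside pairs inside $\supp(S)$ are the same in $G$ and in $G\star^1A$. Then $\sigma_{S'}$ is divisible by $2^{r-1}$, so it lies in $\{0,2^{r-1}\}\bmod 2^r$, and the toggles become genuinely additive mod $2$. This gives $G\star^rS=G\star^1A\star^rS'$. It remains to verify that $S'$ is $r$-incident in $G\star^1A$. For each $k$, the higher-$K$ sums differ from those of $S$ by multiples of $2^{r-1}|\cdot|$, and these are divisible by $2^{r-k-\delta(k)}$ whenever $k\ge 1$, or $k=0$ as above. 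Finally, reducing multiplicities shrinks the support and may create new twins or low-degree vertices, so I would perform Step 3 first and then Steps 1–2. Step 1 can make multiplicities exceed $2^{r-1}-1$, so the process is closed up by reapplying Step 3 and composing successive $\star^1$ operations on independent subsets of $\supp(S)$. These compose to a single $\star^1 A$ because local complementations at non-adjacent vertices commute and leave one another's neighbourhoods intact.
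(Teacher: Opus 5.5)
Your proposal is correct and follows the paper's proof. You merge each twin class of $\supp(S)$ into one representative, discard vertices of degree $0$ or $1$, then reduce multiplicities modulo $2^{r-1}$ and absorb the odd multiples of $2^{r-1}$ into an ordinary local complementation $\star^1 A$; your use of the $k=0$ incidence condition, together with the fact that $\star^1 A$ leaves every neighbourhood of a vertex of $\supp(S)$ unchanged, supplies exactly the check the paper calls straightforward. The only superfluous part is the alternating iteration: twinness and degree depend only on the graph, not on the multiset (and are unchanged on $\supp(S)$ by $\star^1 A$), so shrinking the support never creates new twins or low-degree vertices, and a single pass of Steps 1--2 followed by Step 3, as in the paper, already suffices.
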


\begin{proof}
    First, we can construct a set $S_0$ from $S$ as follows.
    For every vertex $u$ of degree 0 or 1 in $\supp(S)$, let $S_0(u) = 0$. For every set of twins $u_1, u_2, \cdots, u_k$ in $\supp(S)$ (that are not of degree 1), let $S_0(u_1) = S(u_1) + S(u_2) + \cdots + S(u_k)$ and $S_0(u_i)=0$ for $i \in [2,k]$. For every other vertex $u$, let $S_0(u) = S(u)$.

    Second, let $A$ be the set of vertices such that
    $(S_0(u) \bmod 2^{r})\gs 2^{r-1}$.
    For every vertex $u$, let $S'(u) = S_0(u) \bmod 2^{r-1}$.

    It is straightforward to check that $S'$ is $r$-incident. Also, $G \star^r S = G \star^{r} 2^{r-1} A \star^{r} S' = G \star^1 A \star^r S'$ (where $2^{r-1} A$ is the multiset that contains each vertex of $A$, $2^{r-1}$ times).
\end{proof}

The structure of circle graphs leads to a restriction on their independent sets.

\begin{lemma} \label{lemma:circle_no_twins}
    Given a circle graph C and an independent set K (with at least one vertex of degree two or higher but without twins), there are distinct vertices $a,b \in V(C)\sm K$ with exactly one common neighbor in $K$, i.e., $|N_C(a) \cap N_C(b) \cap K| = 1$.
\end{lemma}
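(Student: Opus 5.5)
The plan is to turn the statement into a statement about paths in the tree $T$ and the multigraph $H$ built from $K$ in \Cref{subsec:circle_graphs}. Each vertex of $K$ is an edge of $T$. Each vertex $e\in V(C)\sm K$ closes a fundamental cycle $F_e$. Let $P_e \defeq F_e\sm\{e\}$, which is the unique path in $T$ between the endpoints of $e$. Since $K$ is independent, $N_C(e)\cap K = E(P_e)$ for $e\notin K$, and $N_C(k)=\{e\notin K : k\in P_e\}$ for $k\in K$. Hence $N_C(a)\cap N_C(b)\cap K = E(P_a)\cap E(P_b)$, and the lemma becomes: \emph{there are two distinct $a,b\notin K$ whose tree paths share exactly one edge.}

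Next I would translate the two hypotheses into this language. A vertex $k\in K$ of degree at least two is a tree edge lying on at least two paths $P_a,P_b$ with $a\neq b$. So the family of pairs $(a,b)$, $a\neq b$, with $E(P_a)\cap E(P_b)\neq\emptyset$ is nonempty. Two vertices $k_1,k_2\in K$ are twins exactly when every path $P_e$ contains both or neither of them, since their neighbourhoods lie outside $K$. The key tree fact is that the intersection of two paths in a tree is again a path, possibly empty.

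The main argument is by minimality. Among all pairs $a\neq b$ with $I\defeq P_a\cap P_b$ nonempty, choose one minimizing $|E(I)|$, and suppose for contradiction that $|E(I)|\geq 2$. Then $I$ contains two consecutive edges $k_1,k_2$ meeting at a vertex $w$. Since they are not twins, some $P_c$ contains exactly one of them, say $k_1$ but not $k_2$; in particular $c\neq a,b$. Then $P_c\cap P_a$ contains $k_1$ and avoids $k_2$, so it lies in the part of $P_a$ on the $k_1$-side of $w$, and the same holds for $P_b$.

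The step needing most care is showing that one of $P_c\cap P_a$ and $P_c\cap P_b$ is strictly contained in $I$. Let $u$ be the end of $I$ on the $k_1$-side. Beyond $u$, the paths $P_a$ and $P_b$ either diverge into different branches of the tree or one of them stops. A path $P_c$ leaving $I$ at $u$ can follow at most one of these continuations. Therefore for at least one of $a,b$, say $a$, we get $P_c\cap P_a\subseteq I\sm\{k_2\}$, while still $k_1\in P_c\cap P_a$. So the pair $(a,c)$ has a nonempty intersection with fewer edges than $I$, contradicting minimality. Hence $|E(I)|=1$, and the chosen $a,b$ have exactly one common neighbour in $K$.
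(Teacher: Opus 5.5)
Your proof is correct and follows essentially the paper's route. Like the paper, you choose a pair $a\neq b$ minimizing the nonempty common part of their fundamental cycles in $H$, take a non-twin witness for two common $K$-neighbours, and show the witness paired with one of $a,b$ gives a strictly smaller nonempty intersection. Your consecutive edges $k_1,k_2$ and the branching argument at the $k_1$-end of $I$ simply make explicit the paper's one-line claim that $F_x$ cannot meet both $E(F_a)\setminus E(F_b)$ and $E(F_b)\setminus E(F_a)$; that claim tacitly uses the fact that $F_x$ also meets the common path.
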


\begin{proof}
    Suppose by contradiction that there exist no two distinct vertices $a,b \in V(C)\sm K$ such that $|N_C(a) \cap N_C(b) \cap K| = 1$. By hypothesis there exists at least one vertex in $K$ of degree two or higher, then there exist distinct vertices $a,b \in V(C)\sm K$ such that $|N_C(a) \cap N_C(b) \cap K| \gs 2$. Suppose without loss of generality that $|N_C(a) \cap N_C(b) \cap K|$ is minimal, i.e.,

\begin{align*}
&|N_C(a) \cap N_C(b) \cap K|\\
  &= \min_{\substack{
        u,v \in V(C)\setminus K,\, u \neq v,\\
        N_C(u)\cap N_C(v)\cap K \neq \emptyset}}
     \left|\,N_C(u) \cap N_C(v) \cap K\,\right|.
\end{align*}

    For an example of the construction that now follows, see \Cref{fig:multigraph_H_cycles}.

    Let $T=(V(T),E(T))$ be the tree corresponding to the independent set $K$. Recall that the vertices in $K$ correspond to edges in $T$. For every vertex $u \in V(C)\sm K$ such that $|N_C(u)\cap K|\gs 1$, we create an edge $e \in \binom{V(T)}{2}$ between two vertices in $V(T)$, such that the single fundamental cycle in $(V(T),E(T)+e)$ contains only the edges in $E(T)$ corresponding to all neighbors of $u$ in $K$. At the end of the procedure, we end up with a \textit{multigraph} $H$ (i.e.,~there may be multiple edges between two vertices), where $|E(H)|=|V(C)|$.

    By construction, the fundamental cycles $F_a$ and $F_b$ corresponding to $a$ and $b$, respectively, share $|N_C(a) \cap N_C(b) \cap K| \gs 2$ edges in $E(H)$. Let $u,v \in |N_C(a) \cap N_C(b) \cap K|$ be two of the common neighbors of $a$ and $b$ in $K$.

    Hence, $F_a$ and $F_b$ in $H$ overlap on the edges $u$ and $v$.

    By hypothesis, $u$ and $v$ are not twins, thus there exists a vertex $x \in V(C) \sm K$
    that is a neighbor to only one of $u$ or $v$.
    Note that if $F_x$ contains both an edge in $E(F_a)\setminus E(F_b)$ and an edge in $E(F_b)\setminus E(F_a)$, then both $u$ and $v$ would be edges in $F_x$. Thus, without loss of generality, by symmetry between $a$ and $b$, we may assume that $F_x$ does not contain any edge in $E(F_b)\setminus E(F_a)$. Then $N_C(x) \cap N_C(b) \cap K$ is smaller than $N_C(a) \cap N_C(b) \cap K$, contradicting that $|N_C(a) \cap N_C(b) \cap K|$ is minimal. (Note that it is strictly smaller because we lose at least one of $u,v$.)
\end{proof}

We now apply both \Cref{lemma:notwins}
and \Cref{lemma:circle_no_twins}.

\begin{theorem} \label{thm:circle_graphs_closed}
For circle graph states every valid $r$-local complementation is implemented by some sequence of local complementations.
    Equivalently, circle graph states satisfy $\text{LU} = \text{LC}$ and circle graphs are closed under $r$-local complementation.
\end{theorem}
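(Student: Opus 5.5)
We argue that on a circle graph every valid $r$-local complementation reduces to ordinary local complementations. The argument combines \Cref{lemma:notwins} with \Cref{lemma:circle_no_twins}. Let $C$ be a circle graph and $S$ an independent $r$-incident multiset, with $r\gs 2$; the case $r=1$ is already a sequence of local complementations.

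\textbf{Step 1: normalize $S$.} By \Cref{lemma:notwins} we write $C\star^r S = C\star^1 A\star^r S'$. Here $A$ is independent and $S'$ is independent and $r$-incident. Its support has no twins and no vertices of degree $0$ or $1$, and $S'(u)\in[0,2^{r-1}-1]$ for all $u$. Since $A$ is independent, $C\star^1 A$ is a sequence of local complementations on the vertices of $A$. It therefore suffices to show $C\star^r S' = C$, i.e.\ that $S'$ is empty, or at least acts trivially.

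\textbf{Step 2: $\supp(S')$ is empty.} Suppose for contradiction that $K:=\supp(S')$ is nonempty. Then $K$ is an independent set of $C$ without twins, and every vertex of $K$ has degree at least two. \Cref{lemma:circle_no_twins} then gives distinct $a,b\in V(C)\sm K$ with $N_C(a)\cap N_C(b)\cap K=\{u\}$ for a single vertex $u$. Apply the $r$-incidence condition of \Cref{def:rincidence} with $k=0$ and the set $\{a,b\}\subseteq V(C)\sm\supp(S')$. It says $\sum_{w\in N(a)\cap N(b)}S'(w)=S'(u)$ is a multiple of $2^{r-0-1}=2^{r-1}$. But $1\le S'(u)\le 2^{r-1}-1$, a contradiction. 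Hence $S'$ is the zero multiset and $C\star^r S'=C$.

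One subtlety needs checking: after Step 1 the ambient graph is still $C$, not $C\star^1 A$. \Cref{lemma:notwins} is stated as $G\star^r S = G\star^1 A\star^r S'$ with $S'$ $r$-incident in $G$, so the lemma must be applied in $C$ itself. If the order of operations requires $r$-incidence in $C\star^1A$ instead, we use that $C\star^1A$ is again a circle graph, by closure under local complementation. We also use that local complementation on vertices of $A$, which are disjoint from $\supp(S')$ or coincide with it, preserves the relevant neighbourhood sums; otherwise we simply apply the lemma in the appropriate graph. This bookkeeping is the main point to verify carefully. The contradiction argument itself is immediate once \Cref{lemma:circle_no_twins} is available.

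\textbf{Step 3: conclude.} Every valid $C\star^r S$ equals $C\star^1 A$, a sequence of local complementations, so it is again a circle graph. By induction along any sequence of $r$-local complementations, circle graphs are closed under $r$-local complementation. Every graph reachable from $C$ is also reachable by local complementations alone. By the characterization of LU- and LC-equivalence via $r$-local and local complementation \cite{claudet2024local,vandennestGraphicalDescriptionAction2004}, this gives $\text{LU}=\text{LC}$ for circle graph states. It extends to all stabilizer states LU-equivalent to circle graph states, since each such state is LC-equivalent to a graph state.
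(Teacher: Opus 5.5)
Your proposal is correct and follows the paper's proof essentially step for step: you normalize $S$ via \Cref{lemma:notwins}, then use \Cref{lemma:circle_no_twins} together with the $k=0$ case of $r$-incidence to force $\supp(S')=\emptyset$, so that $C\star^r S=C\star^1 A$. The ordering issue you flag, which the paper does not address, resolves cleanly: $A$ and $\supp(S')$ both lie in the independent set $\supp(S)$, so local complementation at a vertex $u\in A$ only toggles edges between vertices of $N_C(u)$, none of which lie in $\supp(S)$; hence the neighbourhoods of vertices in $\supp(S')$, and with them $r$-incidence, twins and degrees, are the same in $C$ and in $C\star^1 A$.
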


\begin{proof}
    Let $C$ be a circle graph, and $S$ be an independent $r$-incident multiset of $C$. According to \Cref{lemma:notwins}, there exists an independent $r$-incident multiset $S'$ such that
    \begin{itemize}
        \item $\supp(S')$ contains no twins;
        \item $\supp(S')$ contains no vertex of degree 0 or 1;
        \item for every vertex $u$, $S'(u) \in [0, 2^{r-1}-1]$;
    \end{itemize}
    and a set $A$ of vertices, such that $C \star^r S = C \star^1 A \star^r S' $.
    We now show that $S'$ is empty.

    Suppose by contradiction that $\supp(S')$ is non-empty.
    Thanks to our application of \Cref{lemma:notwins},  we know that then $\supp(S')$ contains at least one vertex of degree two or higher, and that $\supp(S')$ does not contain twins.

    Thus we can apply \Cref{lemma:circle_no_twins}:
    There exist two distinct vertices $a, b \in V(C)\sm \supp(S')$ such that $|N_C(a) \cap N_C(b) \cap \supp(S')| = 1$, implying that $S'$ is not $r$-incident. Indeed, if $w\in\supp(S')$ is the unique common neighbor of $a$ and $b$ in $\supp(S')$, i.e.,~$N_C(a) \cap N_C(b) \cap \supp(S')=\{w\}$ then $\sum_{u \in \bigcap_{v\in \{a,b\}}N_C(v)}S'(u) = S'(w)$, which is not a multiple of $2^{r-1}$, as $S'(w) \in [1,2^{r-1}-1]$.
    (c.f. $k=0, \delta(k)=1$ in the definition of $r$-incidence.)

    Thus, $\supp(S')$ is empty, and $C \star^r S = C \star^1 A$. As circle graphs are closed under local complementation, $C \star^r S$ is a circle graph.
\end{proof}

In the subsequent section, we will see that this has immediate consequences for MBQC
on circle graph states by leveraging a one-to-one correspondence between planar code states and bipartite circle graph states.

\section{Planar code states are bipartite circle graph states} \label{sec:planar_code_states}

The planar code states introduced in \Cref{subsec:planar_code_states} correspond to bipartite circle graph states  introduced in \Cref{subsec:circle_graphs}.

\begin{figure*}[htbp]
\centering
  \includegraphics[width=\textwidth]{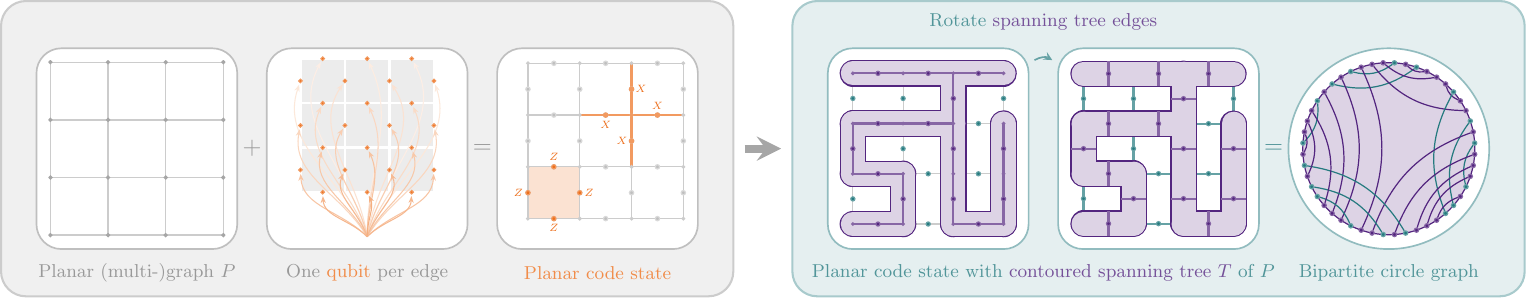}
\caption{Illustration of Theorem~\ref{thm:planar_code_states_are_bipartite_circle_graph_states}.
The gray box shows an example of a \textcolor{orange}{planar code state $\ket{\psi}$}, where the corresponding \textcolor{gray}{planar multigraph $P$} is the $(4\times 4)$-grid graph without any multideges.
The green box then shows how an arbitrary selection of a \textcolor{violet}{spanning tree $T$} of $P$ allows for the construction of a corresponding bipartite circle graph by applying a Hadamard gate to \textcolor{green}{each qubit outside of the spanning tree}.
}
\label{fig:planar_code_states_are_bipartite_circle_graphs}
\end{figure*}

\begin{theorem}\label{thm:planar_code_states_are_bipartite_circle_graph_states}
  Every planar code state is LC-equivalent to a bipartite circle graph state. Conversely, every bipartite circle graph state is LC-equivalent to a planar code state.
\end{theorem}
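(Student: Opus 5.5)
Fix a spanning tree $T$ of the planar multigraph $P$ (one per connected component if $P$ is disconnected; self-loops are isolated qubits and are handled trivially). The plan is to identify the fundamental graph of $P$ with respect to $T$, namely the bipartite graph on $E(P)$ with parts $E(T)$ and $E(P)\setminus E(T)$ in which $t\in E(T)$ is adjacent to $e\notin E(T)$ exactly when $t$ lies on the fundamental cycle $F_e$ of $T+e$. We then show two things: this graph is a bipartite circle graph, and the planar code state becomes its graph state after Hadamard gates on the qubits outside $T$, as in \Cref{fig:planar_code_states_are_bipartite_circle_graphs}.

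\textbf{Stabilizer computation.} The vertex stabilizers $\bigotimes_{e\ni v}X_e$ generate the cut space of $P$ (in $X$-type form). The face stabilizers generate the cycle space (in $Z$-type form), since for a connected plane multigraph the face boundaries span the cycle space. Both spaces have convenient bases:
\begin{itemize}
\item[(i)] For each $t\in E(T)$, the fundamental cut $D_t$ (edges joining the two components of $T-t$) contains $t$ and exactly those non-tree edges $e$ with $t\in F_e$.
\item[(ii)] For each $e\notin E(T)$, the fundamental cycle $F_e$ contains $e$ and the tree edges on its path.
\end{itemize}
The stabilizer group is therefore generated by $X_t\prod_{e:\,t\in F_e}X_e$ and $Z_e\prod_{t\in F_e}Z_t$. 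Conjugating by $H$ on every non-tree qubit gives $X_t\prod Z_e$ and $X_e\prod Z_t$, which are exactly the canonical generators $X_uZ_{N(u)}$ of the graph state of the fundamental graph. This step is routine; the only care needed is in checking the cut/cycle span claims and the counting of generators via Euler's formula.

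\textbf{Circle representation.} This is the key step, and it is where planarity enters. Draw a thin closed curve around the embedded tree $T$, that is, the boundary of a small tubular neighbourhood of $T$, which is a circle because $T$ is a tree. Each tree edge $t$ becomes a chord crossing the tube transversally at its midpoint. Each non-tree edge $e$ leaves the tube at its two endpoints in the plane, and we represent it by the chord joining its two exit points on the curve.

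By planarity the non-tree edges do not cross one another outside the tube, so non-tree chords do not interleave. Tree chords are pairwise disjoint as well. A non-tree chord $e$ interleaves the tree chord $t$ exactly when the endpoints of $e$ lie on opposite sides of $t$ in $T$, which holds iff $t\in F_e$. The main obstacle is making this interleaving claim rigorous, in particular the cyclic ordering of exit points around the tube. I would argue it via the rotation system: the closed walk around $T$ visits the edge-ends in the rotation order given by the embedding, and non-crossing of the outer edges forces a non-interleaving pattern among their end-pairs.

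\textbf{Converse.} Given a bipartite circle graph with parts $K$ and $L$, $K$ is independent, so the construction of \Cref{subsec:circle_graphs} (from \cite{Fraysseix1981}) yields a tree $T$ with edge set $K$ and a multigraph $H$ with $E(H)=E(T)\cup L$, in which $u\in K$ is adjacent to $e\in L$ iff $u\in F_e$. Since $L$ is also independent, its chords are mutually non-crossing. Each such chord can then be drawn inside the circle within the regions it passes through, and non-crossing ensures these drawings do not intersect, so $H$ is planar. Vertices of degree $0$ in $L$ become self-loops, and isolated vertices in $K$ become bridges; both correspond to isolated qubits. Applying the stabilizer computation above with spanning tree $T$ shows that the graph state is LC-equivalent, via Hadamards on $L$, to the planar code state of $H$.
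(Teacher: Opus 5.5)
Your stabilizer computation is the paper's own argument, recast in cut-space/cycle-space terms. The paper multiplies the face stabilizers enclosed by $F_a$ and the vertex stabilizers on one component of $T-t$. Where you differ is the circle representation: the paper simply cites de Fraysseix's theorem that bipartite circle graphs are exactly the fundamental graphs of planar multigraphs, while you prove it. Your forward tube construction is a correct proof of one half of that theorem. The interleaving step you flag does not need a rotation-system argument: the transversal arc at $t$ cuts the tube into neighbourhoods of the two components of $T-t$. It thereby splits the boundary curve into two arcs, and the exit points of $e$ lie on the arcs belonging to the components containing its endpoints.

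The gap is in the converse, where you conclude that $H$ is planar. Non-crossing of the $L$-chords only rules out crossings among the $L$-edges. A plane drawing of $H$ must also avoid the edges of $T$, and the drawing you describe (each $L$-edge inside the circle, through the regions its chord passes) generally does not.

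Suppose the chord $e$ passes through a region $R$ of the $K$-dissection without ending there. Then its portion inside $R$ joins two bounding $K$-chords and separates $R$. If $R$ has further bounding $K$-chords on both sides of $e$, the tree edges from the vertex of $R$ to those neighbouring regions cannot all avoid $e$. For a concrete case, take $K=\{a,b,c,d\}$ with $a,b$ parallel, and $c$ and $d$ small chords cutting caps off the top and bottom of the region between $a$ and $b$. Let $L=\{e\}$ be a chord crossing only $a$ and $b$. Drawn along its chord, $e$ meets the tree edge from the middle vertex to the top or the bottom vertex.

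$H$ is still planar, but the argument needs a different embedding: your forward construction run backwards.
\begin{enumerate}
\item Keep $T$ inside the disk as the dual tree of the $K$-dissection.
\item Join the two endpoints of each $L$-chord by an arc \emph{outside} the circle. Pairwise disjoint such arcs exist precisely because the $L$-chords do not interleave.
\item Connect each endpoint to the tree vertex of its region by a curve inside that region that avoids the tree edges. This is possible because each circular arc of the region's boundary lies between two consecutive tree edges at that vertex.
\end{enumerate}
Alternatively, cite de Fraysseix for this direction, as the paper does.
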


\begin{proof}
  Let $P = (V(P),E(P))$ be a planar multigraph, and $\ket{\psi}$ be the corresponding planar code state.
  The qubits of $\ket{\psi}$ correspond to the edges of $P$
  (cf.~gray box of Figure \ref{fig:planar_code_states_are_bipartite_circle_graphs}).

  Let $T$ be a spanning tree of $P$. Notice $V(T)=V(P)$ and $E(T) \se E(P)$. Let $\ket{C} \defeq H_{u\in E(P)\sm E(T)} \ket{\psi}$ be the state obtained from $\ket\psi$ by applying a Hadamard gate on the qubits corresponding to the edges in $E(P)$ not contained in $E(T)$. We will show that $\ket C$ is a bipartite circle graph state
  (cf.~green box of Figure \ref{fig:planar_code_states_are_bipartite_circle_graphs}).

  Since $P$ is planar, each edge in $a \in E(P)\sm E(T)$ induces a unique fundamental cycle $F_a \se E(P)$ when added to its spanning tree $T$.

  First, let us show that for each edge $a \in E(P)\sm E(T)$, $X_a Z_{F_a\sm\{a\}}$ stabilizes $\ket C$, i.e.,~$X_a Z_{F_a\sm\{a\}} \ket C = \ket C$. The product of the stabilizers of $\ket \psi$ corresponding to the faces enclosed by $F_a$ is $Z_{F_a}$. Indeed, every edge in $E(P)\sm E(T)$ inside the area delimited by $F_a$, appears exactly twice in the product. Then,
  \begin{align*}
    X_a Z_{F_a\sm\{a\}} \ket C & = X_a Z_{F_a\sm\{a\}} H_{u\in E(P)\sm E(T)} \ket{\psi}\\
    & = H_{u\in E(P)\sm E(T)} Z_{F_a} \ket{\psi}\\
    & = H_{u\in E(P)\sm E(T)} \ket{\psi} = \ket C
  \end{align*}

  Second, let us show that that for each edge $t \in E(T)$, $X_t Z_{F^{-1}_t}$ stabilizes $\ket C$, where $F^{-1}_t \se E(P) \sm E(T)$ is the set of edges $a \in E(P) \sm E(T)$ such that $t \in F_a$.
  Edge $t$ separates $T$ into two unique trees $T_1$ and $T_2$, such that $V(T) = V(T_1) \sqcup V(T_2)$ and $E(T) = \{t\} \sqcup E(T_1) \sqcup E(T_2)$. The product of the stabilizers of $\ket \psi$ corresponding to the vertices in $V(T_1)$ is $X_{F^{-1}_t \cup \{t\}}$.

  Indeed, no edge in $E(T_2)$ appears in the product, and each edge in $E(T_1)$ appears twice in the product, while $t$ appears exactly once in the product. Also, no edge in $E(P) \sm E(T)$ whose fundamental cycle lies in $E(T_2)$ appears in the product, and each edge in $E(P) \sm E(T)$ whose fundamental cycle lies in $E(T_1)$ appears twice in the product. The edges that appear once in the product are exactly those edges in $E(P) \sm E(T)$ with one end in $V(T_1)$ and the other end in $V(T_2)$, i.e.,~those edges whose fundamental cycle contains $t$.

  Then,
  \begin{align*}
    X_t Z_{F^{-1}_t} \ket C & = X_t Z_{F^{-1}_t} H_{u\in E(P)\sm E(T)}\ket{\psi}\\
    & = H_{u\in E(P)\sm E(T)} X_{F^{-1}_t \cup \{t\}} \ket{\psi}\\
    & = H_{u\in E(P)\sm E(T)} \ket{\psi} = \ket C
  \end{align*}

  Thus, $\ket C$ is a bipartite graph state, i.e.,~$C=(L(C)\sqcup R(C),E(C))$ where both $L(C)$ and $R(C)$ are independent sets. The leftmost vertices in $L(C)$ correspond to edges in $E(T)$, and rightmost vertices in $R(C)$ correspond to edges in $E(P) \sm E(T)$. The connectivity of $C$ follows from the fundamental cycles of $T$. More precisely, a rightmost vertex $r \in R(C)$ is connected to a leftmost vertex $l \in L(C)$ if and only if the edge in $E(T)$ corresponding to $l$ is in the fundamental cycle induced by the edge in $E(P)\sm E(T)$ corresponding to $r$.

  In graph-theoretical terms, the bipartite graph $C$ is a fundamental graph of the planar multigraph $P$. It follows that $C$ is a bipartite \textbf{circle} graph, as bipartite circle graphs are exactly the fundamental graphs of planar multigraphs \cite{Fraysseix1981}.

  Conversely, for any bipartite circle graph $C$ there is a planar multigraph $P$ such that $C$ is a fundamental graph of $P$.
\end{proof}

\begin{remark}
    In general, there are many possible spanning trees of a planar multigraph. Choosing a different spanning tree may lead to a different bipartite circle graph, but those different bipartite circle graphs are related by pivotings.
    This makes sense as pivoting on bipartite graphs are implemented with Hadamard gates \cite{mhalla2012graph}.
    Choosing a spanning tree amounts to choosing where to apply these Hadamard gates.
\end{remark}

As we showed that for circle graph states, LC- and LU-equivalence coincide, the same goes for planar code states.

\begin{corollary}\label{cor:planar_code_states_LU_LC}
  Planar code states satisfy $\text{LU} = \text{LC}$.
\end{corollary}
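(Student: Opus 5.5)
The plan is to chain \Cref{thm:planar_code_states_are_bipartite_circle_graph_states} with \Cref{thm:circle_graphs_closed}, using that LC- and LU-equivalence are both equivalence relations and that LC-equivalence implies LU-equivalence. Let $\ket\psi$ be a planar code state and let $\ket\phi$ be any stabilizer state that is LU-equivalent to $\ket\psi$. We must show that $\ket\phi$ is LC-equivalent to $\ket\psi$.

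First, by \Cref{thm:planar_code_states_are_bipartite_circle_graph_states} there is a bipartite circle graph $C$ with $\ket\psi$ LC-equivalent to $\ket C$. Second, since every stabilizer state is LC-equivalent to some graph state, there is a graph $G$ with $\ket\phi$ LC-equivalent to $\ket G$. Local Clifford operators are local unitaries, so composing the three relations
\[
\ket C \sim_{\text{LC}} \ket\psi \sim_{\text{LU}} \ket\phi \sim_{\text{LC}} \ket G
\]
shows that $\ket G$ is LU-equivalent to the circle graph state $\ket C$. The only point to check is that products and inverses of tensor products of single-qubit unitaries are again of this form; global phases are absorbed into one of the single-qubit factors.

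Third, apply \Cref{thm:circle_graphs_closed}. Since $\text{LU}=\text{LC}$ holds for circle graph states, $\ket G$ is LC-equivalent to $\ket C$. In the graphical language, $G$ is obtained from $C$ by $r$-local complementations, and each of these is implemented by ordinary local complementations. Chaining back gives
\[
\ket\phi \sim_{\text{LC}} \ket G \sim_{\text{LC}} \ket C \sim_{\text{LC}} \ket\psi,
\]
which is the desired conclusion.

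There is no real obstacle here, only bookkeeping. Qubit labels must be matched consistently across the LC maps: the LC-equivalence in \Cref{thm:planar_code_states_are_bipartite_circle_graph_states} is a Hadamard on a subset of qubits, so it preserves labels. The definition of $\text{LU}=\text{LC}$ for a family must also be read as quantifying over all stabilizer states, which the reduction to graph states handles.
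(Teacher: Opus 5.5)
Your proposal is correct and follows the same route as the paper, which simply observes that $\text{LU} = \text{LC}$ carries over from circle graph states (\Cref{thm:circle_graphs_closed}) to planar code states through the LC-equivalence of \Cref{thm:planar_code_states_are_bipartite_circle_graph_states}. Your extra bookkeeping (reducing an arbitrary stabilizer state to a graph state, absorbing global phases, and keeping qubit labels consistent under the Hadamard map) just spells out what the paper leaves implicit.
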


This was proven in \cite{Sarvepalli2010PRA}, but with some non-trivial restrictions, that translate to the corresponding bipartite circle graph not having any vertex of degree 1 or twins. For example, their proof does not work for the planar code state defined of a 2D grid (see \Cref{fig:planar_code_states_are_bipartite_circle_graphs} for example).

Planar code states are not efficient universal resources for MBQC.

\begin{theorem}[\cite{Bravyi2007,Bravyi2022}]\label{thm:planar_code_states_ECS}
  MBQC is efficiently classically simulable on the planar code state.
\end{theorem}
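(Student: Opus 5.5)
The statement is cited from \cite{Bravyi2007,Bravyi2022}, so rather than re-deriving it from scratch I would outline how it follows from the structure of planar code states. I would proceed by mapping to a problem already known to be classically tractable: computing marginals of measurement outcomes on a planar code state.

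The first step is to reduce MBQC simulation to marginal probabilities. An MBQC run is a sequence of single-qubit measurements in adaptively chosen bases. Sampling its outcomes then reduces, by the chain rule, to computing conditional marginals of the form $\Pr(s_1,\dots,s_k)$ for product projectors $\bigotimes_{e} \Pi_e$ acting on the first $k$ measured qubits. The remaining qubits are traced out, which is equivalent to leaving them unmeasured.

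The second step is to write each such probability as a partition function. Each $\Pi_e = \tfrac12(I + \vec n_e\cdot\vec\sigma)$ is expanded in the Pauli basis. Since $\ket{\psi}$ is a CSS stabilizer state, $\bra\psi P\ket\psi$ is nonzero only when $P$ lies in the stabilizer group, up to sign. Stabilizer elements are products of vertex $X$-operators, which correspond to cuts of $P$, and face $Z$-operators, which correspond to cycle spaces (equivalently, cuts of the dual). The resulting sum is therefore a sum over pairs of (cut, dual cut) with local edge weights. This can be recast as an Ising-type partition function with complex couplings on the planar multigraph $P$. Equivalently, it can be written as a planar tensor network whose local tensors are matchgate-like; this is the content of \cite{Bravyi2007}.

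The third step, and the one I expect to be the main obstacle, is to show that the resulting planar partition function is exactly computable in polynomial time. This requires checking that mixed $X$/$Y$/$Z$ measurement terms on a single edge still give weights of the right form. That is, the edge tensors must satisfy the matchgate (parity-preserving Gaussian) identities, so that the sum is a Pfaffian via FKT / Kasteleyn orientation on a planar graph. I would try first to verify that each single-edge tensor, as a function of its two boundary indices (primal cut bit and dual cut bit), is a $2\times2$ Gaussian form. Where it is not, I would use the free-fermion mapping of \cite{Bravyi2022}.

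Finally, polynomially many Pfaffian evaluations give the conditional probabilities, and hence efficient sampling of the MBQC outcomes.
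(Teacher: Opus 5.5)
\textbf{Genuine gap.} The problem is where your first and second steps meet. The chain rule needs marginals in which the not-yet-measured qubits $\bar A$ are traced out, and your Pauli expansion does not turn these marginals into a planar Ising/Pfaffian computation.

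Up to normalization, expanding over the stabilizer group gives $\sum_{x,z}\prod_e \operatorname{tr}(\Pi_e X^{x_e}Z^{z_e})$, with $x$ ranging over the cut space and $z$ over the cycle space of $P$, and $\Pi_e=I$ on unmeasured edges. This is not an Ising model on $P$. It is a two-layer model: spins on the vertices \emph{and} on the faces of $P$, coupled by a four-spin term at every edge, i.e.\ an eight-vertex-type model on the medial graph.

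The measured qubits are not the difficulty---if every qubit is measured, the probability is just $|\langle\phi|\psi\rangle|^2$. The difficulty is a traced-out qubit, whose weight is $\operatorname{tr}(X^{x_e}Z^{z_e})=2\delta_{x_e,0}\delta_{z_e,0}$. This weight contracts $e$ for the vertex spins but deletes it (identifying its two faces) for the face spins. After doing this for all of $\bar A$, the two layers live on $P/\bar A$ and on $(P\setminus\bar A)^{*}$, which in general are not planar duals of each other. In eight-vertex terms, such an edge has only one of the four weight classes nonzero, which violates the free-fermion (matchgate) relation.

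So the single-edge check you propose in your third step fails exactly at the unmeasured qubits, and these are present at every intermediate step of the chain rule. This is the same obstruction that confined the marginal-based algorithm of \cite{Bravyi2007} to special sets of measured qubits, which is the restriction the paper mentions.

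\textbf{The missing idea.} It is the actual content of \cite{Bravyi2022}. You invoke that reference as a ``free-fermion mapping'', but it is instead a way to avoid marginals altogether: outcomes of adaptively chosen single-qubit measurements can be sampled exactly using only polynomially many \emph{amplitudes}. Start from a $Z$-basis sample of $\ket{\psi}$, which is uniform over cuts. When qubit $j$ is measured, rotate it and resample only its bit, conditioned on the current bits of all other qubits. This needs two amplitudes, and the bit is never revisited afterwards, so later bases may depend on it.

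Each amplitude is an overlap with a product state, and this is what \cite{Bravyi2007} computes. In the $Z$ basis, $\ket{\psi}\propto\sum_{\sigma\in\{\pm1\}^{V(P)}}\ket{\delta\sigma}$, where $\delta\sigma$ is the set of edges whose endpoints carry opposite spins. Hence
\[
\langle\phi|\psi\rangle\;\propto\;\sum_{\sigma}\prod_{uv\in E(P)}\langle\phi_{uv}|\tfrac{1-\sigma_u\sigma_v}{2}\rangle ,
\]
which is a single-layer planar Ising partition function with complex couplings, evaluated via FKT.

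Replacing your chain-rule step by this reduction, and your second and third steps by this one-layer amplitude formula, gives the cited proof. The paper itself simply invokes these two results.
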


MBQC being efficiently classically simulable means here that there exists a classical polynomial time algorithm that approximately samples from the same output distribution as any sequence of (possibly adaptative) single-qubit measurements. \Cref{thm:planar_code_states_ECS} was first proved in \cite{Bravyi2007} under the assumption that the measured qubits are connected, but this restriction was removed in \cite{Bravyi2022}. This result then translates to bipartite circle graphs, as MBQC being efficiently classically simulable for a class of quantum states is invariant by local operations.

\begin{corollary}\label{cor:MBQC_ECS_on_bipartite_circle_graphs}
  MBQC is efficiently classically simulable on bipartite circle graphs.
\end{corollary}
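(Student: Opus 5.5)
The plan is to transfer \Cref{thm:planar_code_states_ECS} from planar code states to bipartite circle graph states via \Cref{thm:planar_code_states_are_bipartite_circle_graph_states}. The point to use is that efficient classical simulability of MBQC is invariant under local Clifford, and indeed local unitary, transformations of the resource state.

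\textbf{Step 1: planar code.} Let $C$ be a bipartite circle graph on $n$ vertices. By the converse direction of \Cref{thm:planar_code_states_are_bipartite_circle_graph_states}, there is a planar multigraph $P$ with $C$ a fundamental graph of $P$. Its edge set has size $n$: the tree edges form one colour class and the non-tree edges the other. Moreover, $\ket{C} = H_{E(P)\sm E(T)}\ket{\psi_P}$ for a suitable spanning tree $T$.

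\textbf{Step 2: efficiency of the reduction.} The map $C\mapsto (P,T)$ must itself be computable classically in polynomial time, since the simulator has to build the planar code instance. I would cite the constructive nature of de Fraysseix's characterization \cite{Fraysseix1981}, or build it directly from a chord diagram of $C$ (computable in polynomial time). Concretely, the chords of one colour class are pairwise noncrossing, so they define a tree as in \Cref{fig:multigraph_H_cycles}. Each chord of the other class then becomes an edge closing a fundamental cycle, and planarity of the resulting multigraph follows from drawing everything inside the disk. This step is the main obstacle, at least for rigor: one must ensure the planar embedding, and hence the face stabilizers, is obtained efficiently.

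\textbf{Step 3: simulation.} Consider any adaptive sequence of single-qubit measurements on $\ket{C}$, each in basis $\{U\ket{0},U\ket{1}\}$. This has the same outcome distribution as measuring $\ket{\psi_P}$ qubitwise in the conjugated bases $\{H^{s_u}U\ket{0},H^{s_u}U\ket{1}\}$, where $s_u=1$ exactly for the non-tree edges. The adaptivity rule is unchanged, since the classical feedback is just recomputed with the conjugated bases. These are still arbitrary single-qubit measurements, so \Cref{thm:planar_code_states_ECS} (in the version of \cite{Bravyi2022}, with no connectivity assumption on the measured qubits) yields a polynomial-time classical sampler. Approximation guarantees carry over exactly, because the two distributions are identical.
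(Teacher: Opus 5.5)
Your proposal is correct and follows the paper's argument. The paper simply transfers \Cref{thm:planar_code_states_ECS} to bipartite circle graph states via the Hadamard-based LC-equivalence of \Cref{thm:planar_code_states_are_bipartite_circle_graph_states} and the invariance of efficient simulability under local unitaries; your Step 3 makes that invariance explicit by conjugating the measurement bases. Your Step 2, on computing $(P,T)$ efficiently, is extra rigor the paper does not supply here. One caveat there: draw the non-tree edges outside the circle (possible because that colour class is also pairwise noncrossing) rather than along their chords, where they can be forced to cross tree edges.
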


To lift \Cref{cor:MBQC_ECS_on_bipartite_circle_graphs} to (possibly non-bipartite) circle graphs, we first need to reduce circle graphs to bipartite circle graphs.

\begin{proposition}\label{prop:vertex_minor_of_bipartite_circle_graph}
  Every $n$-vertex circle graph is a vertex-minor of some $2n^2$-vertex bipartite circle graph.
\end{proposition}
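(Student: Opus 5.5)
The plan is to build, from a chord diagram of the $n$-vertex circle graph $C$, a bipartite chord diagram in which every chord $u$ of $C$ is replaced by a \emph{chain} of at most $2n$ short chords $u_1,\dots,u_{2n}$. Consecutive chords $u_i,u_{i+1}$ will cross, so the chain induces a path. Chords of different chains will cross only where the original chords crossed. Then pivoting and vertex deletion should contract each chain back to a single vertex and recover $C$ as a vertex-minor (indeed a pivot-minor). The size bound $n\cdot 2n=2n^2$ comes from giving each of the $n$ chords $2n$ pieces.

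\textbf{Step 1 (geometric construction).} I fix a chord diagram of $C$ with straight chords in general position. Each chord $u$ is crossed by at most $n-1$ other chords, so it is cut into at most $n$ intervals. I replace $u$ by a zigzag of short chords whose endpoints lie on the circle. To stay inside the class of circle graphs, I would argue with the double-occurrence word of the diagram rather than with pictures: reading the endpoints around the circle, the occurrence of $u$ at one end is replaced by a block of new symbols, and the other occurrence by another block, interleaved so that $u_i$ and $u_{i+1}$ interleave (cross) and nonconsecutive pieces do not. Each original crossing of $u$ and $v$ is assigned to exactly one pair $(u_i,v_j)$, with $u_i$ the piece covering the relevant interval of $u$. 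I expect to give each interval two pieces, which yields the count of $2n$ pieces per chain.

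\textbf{Step 2 (bipartiteness).} I 2-colour each chain by the parity of the piece index. The chains are internally bipartite, so the remaining requirement is that every inter-chain crossing $u_i v_j$ joins pieces of opposite colours. Because each interval of $u$ carries two consecutive pieces of opposite parity, I can choose, for each crossing, which of the two pieces actually participates in it. Hence any parity constraint can be met locally. This slack is exactly why two pieces per interval are needed, and it gives the factor $2$ in $2n^2$.

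\textbf{Step 3 (contraction).} For a path segment $a - b - c$, where $b$ has degree $2$ and $c$ is the next piece of the same chain, pivoting on $bc$ and then deleting $b$ and $c$ makes $a$ inherit the neighbours of $c$ outside the chain. I would check this against the definition $G\star b\star c\star b$. Repeating this along each chain collapses it to a single vertex adjacent exactly to those chains that crossed it, which reproduces $C$. Chains with an odd number of pieces can first be trimmed by deleting an end piece that has no external crossings, or padded in the construction. Since pivot-minors are vertex-minors, this proves the statement.

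The main obstacle is Step 1: I must make sure the interleaving in the double-occurrence word creates no unwanted crossings between pieces of different chains, and none between nonconsecutive pieces of one chain. My first attempt would be to place the two blocks of each chain near the two original endpoints, with the pieces nested as a "ladder", and then to verify the interleaving pattern directly.
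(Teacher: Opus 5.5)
\paragraph{There is a genuine gap in your proposal.} It lies in the only nontrivial case. If $C$ is bipartite you can take $B=C$, so everything depends on destroying odd cycles, and neither Step~3 nor Step~1 does that.

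In Step~3 you contract chains using only pivots and deletions, and you claim that $C$ is even a pivot-minor of $B$. That cannot hold for non-bipartite $C$. For bipartite $G$ and an edge $bc$, the pivot $G\star b\star c\star b$ only toggles pairs between $N_G(b)\setminus\{c\}$ and $N_G(c)\setminus\{b\}$, which lie on opposite sides, and then exchanges the roles of $b$ and $c$. Hence every pivot-minor of a bipartite graph is bipartite (the paper uses this in its excluded-pivot-minor argument), and $K_3$, for example, is never reached.

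The clash with Step~2 is visible directly. Your move requires the middle piece $b$ to have no external neighbours. So the survivor of chain $u$ collects only crossings carried by pieces of one parity, i.e.\ of one colour, and the colouring from Step~2 then becomes a proper 2-colouring of $C$. If a crossing sits on a piece of the ``wrong'' parity, pivoting merely shifts it two steps along the chain. If both $b$ and $c$ carry crossings, pivoting creates spurious edges between other chains. Either way, the parity slack you built in is never consumed. What you need are genuine local complementations. For example, locally complementing at a degree-2 piece $b$ with chain neighbours $a,c$ and then deleting $b$ replaces the path $a - b - c$ by the edge $ac$ and flips parity; this is how $K_3$ arises as a vertex-minor of the $4$-cycle.

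Step~1, as you propose it, also fails for non-bipartite $C$. In a chord diagram, adjacency is determined solely by the cyclic interleaving of endpoints, so a piece cannot ``cover an interval'' of $u$. Suppose all endpoints of $u$'s pieces lie in two contiguous blocks at the old endpoints of $u$. A piece with both ends in one block crosses no other chain. A piece with one end in each block crosses every such long piece of every chain whose chord interleaves $u$. Consequently:
\begin{itemize}
\item the inter-chain edges form a blow-up of $C$ on the long pieces;
\item every chain with a crossing must contain a long piece;
\item all long pieces of such a chain share a neighbour in an adjacent chain, so they get the same colour.
\end{itemize}
Bipartiteness of $B$ therefore again forces $C$ to be bipartite.

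Placing endpoints elsewhere is not ruled out. However, showing that some such arrangement is still a circle graph with the one-edge-per-crossing pattern is the whole difficulty, and your proposal gives no construction for it.

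The paper avoids chains altogether. It passes to the 4-regular multigraph of the double occurrence word and turns each of the at most $\binom{2n}{2}=2n^2-n$ edge crossings of a good drawing into a new vertex. This inserts a new letter twice into the word, so $C$ is the induced subgraph of the new interlacement graph on the old letters. It then invokes Brijder--Traldi's theorem that interlacement graphs of planar 4-regular graphs are locally equivalent to bipartite (circle) graphs. The local complementations in that equivalence are what absorb the odd cycles.
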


\begin{proof}
A graph is a circle graph if and only if it is the vertex-minor of a bipartite circle graph (see e.g.~Corollary 53 of \cite{brijderIsotropicMatroidsII2016}).
We will now show an upper bound on the number of vertices of this bipartite circle graph.
Our construction is illustrated by an example in Figure~\ref{fig:circle_graphs_as_vertex_minors_of_bipartite_circle_graphs}.

For any given $n$-vertex circle graph $C$, we start with a corresponding chord diagram.
From this chord diagram, we can read off (clockwise or counterclockwise) a double occurrence word and draw a 4-regular multigraph with vertices corresponding to the word's letters and edges for all adjacent letters, where the first and the last letter are also considered adjacent.
 The drawing of this 4-regular multigraph may have a crossing number greater than zero, i.e.~a number $f(n)>0$ points in which its edges cross; let us call these points its failures of planarity.

 Now, replace every failure of planarity with an additional vertex. We obtain an $n+f(n)$ vertex, 4-regular multigraph that is planar. According to Theorem 50 in \cite{brijderIsotropicMatroidsII2016}, a graph is the interlacement graph of a planar 4-regular graph if and only if it is locally equivalent to a bipartite graph. Thereby, we have shown that there exists a bipartite circle graph $B$, of which our original circle graph $C$ is a vertex-minor. (Note that the graph state $\ket{C}$ can be obtained from $\ket{B}$ by measuring the $f(n)$ additional qubits in the $Y$-basis. See Theorem 2.6~in Ref.~\cite{HowTransformGrapha})

 What is left to show is that $f(n) \leqslant 2n^2-n$ is an upper bound for the crossing number.
 Note that any 4-regular multigraph has exactly $2n$ edges.
 In a good drawing, none of these edges crosses itself and no two edges cross each other more than once.
 A crossing therefore always corresponds to an unordered pair of edges.
 Since the number of unordered pairs of edges is given by $$\binom{2n}{2}= \frac{(2n)!}{(2n-2)!2!}= \frac{(2n)(2n-1)}{2}=2n^2-n,$$ we have found an upper bound on $f(n)$.
\end{proof}

\begin{figure}[htbp]
\centering
  \includegraphics[width=\columnwidth]{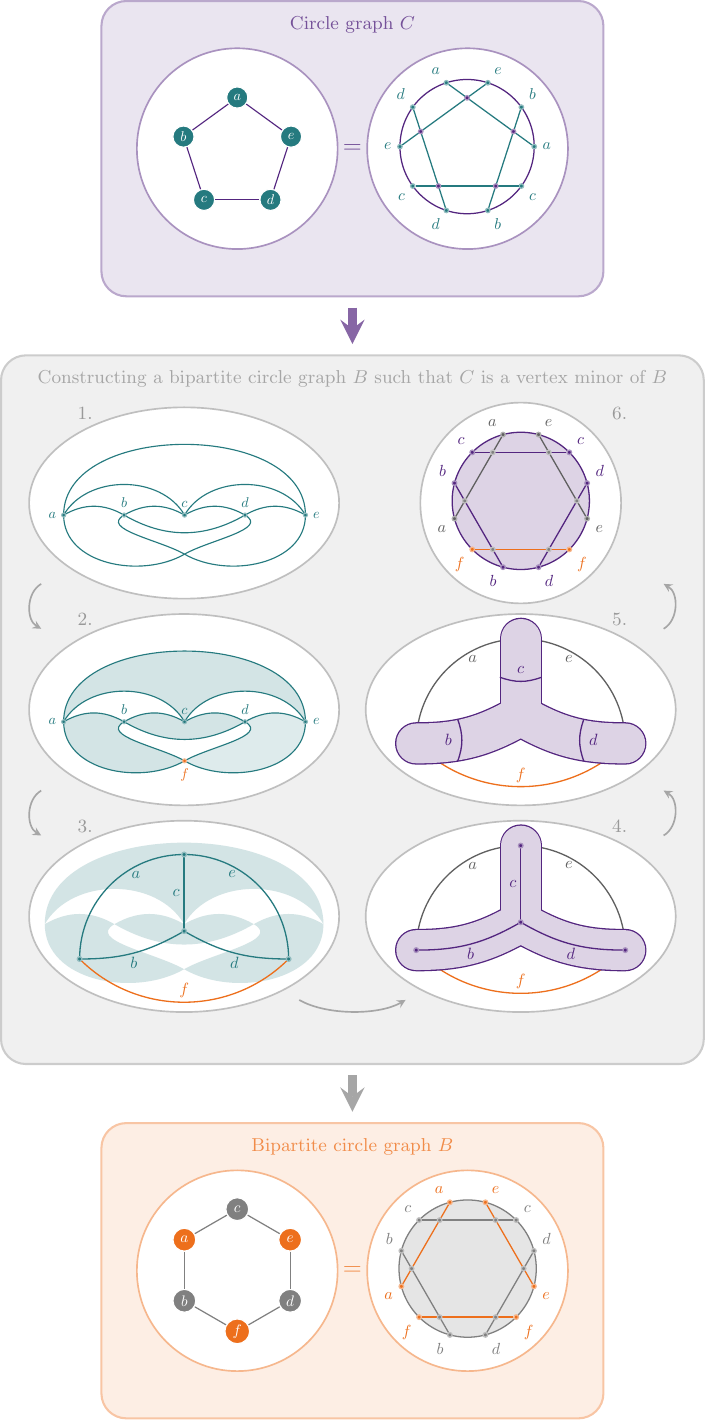}
\caption{
Every circle graph $C$ is a vertex-minor of some bipartite circle graph $B$.
We construct such a $B$ for the example where $C$ is a 5-cycle:
$1.$~From a chord diagram of $C$, we read off a double occurrence word ($aebacbdced$) and \textcolor{green}{draw a 4-regular multigraph} with vertices $a,b,c,d,e$ and edges for all adjacent letters in this word, i.e.,~we contract the chords.
$2.$~This 4-regular multigraph may have nonzero crossing number, but we can planarize it. Here, the crossing number is one and thus \textcolor{orange}{adding a single vertex $f$} yields a planar 4-regular multigraph. We then 2-color the faces of this graph in green and white.
$3.$~Choose one face color (w.l.o.g.~green).
Draw a new graph with vertices for each green face and edges for each point where the green faces touch.
$4.$~\textcolor{violet}{Select a spanning tree and draw a contour around it.}
$5.$~Rotate the edges of the spanning tree so that they connect to the contour.
$6.$~Transform the contour into a circle, yielding a chord diagram of a bipartite circle graph $B$.
}
\label{fig:circle_graphs_as_vertex_minors_of_bipartite_circle_graphs}
\end{figure}

With this, we can show that MBQC is efficiently classically simulable on circle graph states.

Let $\ket{C_n}$ be any $n$-qubit given circle graph state.
By \Cref{prop:vertex_minor_of_bipartite_circle_graph}, we know that there exists a $m$-qubit bipartite circle graph $B_m$ with $m=\mathrm{poly}(n)= 2n^2$ such that $C_n$ can be obtained from $B_m$ by a sequence of operations that are themselves efficiently classically simulable (here Pauli measurements and local Clifford corrections based on the classical feedforward of the measurement outcomes), and that this sequence can be computed in classical polynomial time (see steps in \Cref{fig:circle_graphs_as_vertex_minors_of_bipartite_circle_graphs}).

Now to simulate an MBQC on $\ket{C_n}$, we instead simulate the equivalent procedure on $\ket{B_m}$: we include the classical description of the reduction from $B_m$ to $C_n$ into the measurement pattern, run the known polynomial-time classical simulator for $B_m$ on input size $m$, and classically track the additional local Clifford corrections which require only polynomial time.

Since the above construction runs in time $\mathrm{poly}(m)$ and $m=\mathrm{poly}(n)$, and the overhead from preprocessing and postprocessing is also polynomial, the total runtime stays polynomial in $n$.
With this, we recover the main result of \cite{harrison2025fermionicinsightsmeasurementbasedquantum}.

\begin{corollary}[\cite{harrison2025fermionicinsightsmeasurementbasedquantum}]\label{cor:MBQC_ECS_on_circle_graph_states}
  MBQC is efficiently classically simulable on circle graph states.
\end{corollary}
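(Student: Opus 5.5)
The statement is Corollary~\ref{cor:MBQC_ECS_on_circle_graph_states}. The plan is to reduce to the bipartite case, which is already settled by Corollary~\ref{cor:MBQC_ECS_on_bipartite_circle_graphs}. The reduction uses the quadratic vertex-minor embedding of Proposition~\ref{prop:vertex_minor_of_bipartite_circle_graph}.

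Let $C_n$ be an $n$-vertex circle graph. The argument proceeds in three steps.

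\emph{Step one: construct the bipartite host in polynomial time.} From a chord diagram of $C_n$ we read off its double occurrence word and form the associated 4-regular multigraph. We draw this multigraph with at most $2n^2-n$ crossings and replace each crossing by a new vertex. Following the steps of Figure~\ref{fig:circle_graphs_as_vertex_minors_of_bipartite_circle_graphs}, we 2-colour the faces, choose a spanning tree of the green-face graph, and take the interlacement graph. This yields a bipartite circle graph $B_m$ with $m\le 2n^2$ vertices. Each of these steps is classically polynomial. In particular, a chord diagram of a circle graph can be recognized and computed in polynomial time, and the crossing-bounded drawing is explicit.

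\emph{Step two: compile $\ket{C_n}$ from $\ket{B_m}$.} We record the explicit sequence of local complementations and vertex deletions taking $B_m$ to $C_n$. On the quantum side, this corresponds to $Y$-measurements of the $f(n)$ auxiliary qubits followed by outcome-dependent local Clifford corrections. By the operational characterization of vertex-minors (Proposition 5 of \cite{Cautres2024}, and Theorem 2.6 of \cite{HowTransformGrapha}), this sequence prepares $\ket{C_n}$ exactly, up to known local Cliffords.

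\emph{Step three: simulate.} An adaptive single-qubit measurement pattern on $\ket{C_n}$ is rewritten as a pattern on $\ket{B_m}$. The pattern first performs the Pauli measurements of the auxiliary qubits. It then measures the remaining qubits in bases conjugated by the tracked local Clifford corrections. Conjugating a single-qubit basis by a single-qubit Clifford stays within single-qubit measurements, so the composite is still an adaptive single-qubit MBQC on $\ket{B_m}$. Corollary~\ref{cor:MBQC_ECS_on_bipartite_circle_graphs}, via Theorem~\ref{thm:planar_code_states_ECS} and Theorem~\ref{thm:planar_code_states_are_bipartite_circle_graph_states}, samples this pattern's outcome distribution in time $\mathrm{poly}(m)=\mathrm{poly}(n)$. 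Marginalising away the auxiliary outcomes gives the distribution for the original pattern. The only caveat is approximation: the total-variation error is preserved under marginalisation, so it carries over.

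The main obstacle is making step two genuinely adaptive and efficient. The Clifford corrections depend on the random outcomes of the auxiliary measurements, so the simulator for $B_m$ must handle classical feedforward. This is allowed, since the cited simulation \cite{Bravyi2022} covers adaptive patterns with arbitrary measured subsets. I would verify that the correction-tracking is Pauli-frame bookkeeping of polynomial size; this follows from standard stabilizer-formalism rules.
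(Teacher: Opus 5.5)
Your proposal is correct and follows essentially the same route as the paper. You embed $C_n$ as a vertex-minor of a bipartite circle graph $B_m$ with $m\le 2n^2$ via Proposition~\ref{prop:vertex_minor_of_bipartite_circle_graph}, compile the reduction (Pauli measurements of the auxiliary qubits plus outcome-dependent local Clifford corrections) into the pattern on $\ket{B_m}$, and run the simulator of Corollary~\ref{cor:MBQC_ECS_on_bipartite_circle_graphs} while tracking the corrections. Your extra remarks on chord-diagram recognition, marginalising auxiliary outcomes, and total-variation error spell out details the paper leaves implicit; note that the tracked correction is a fixed local Clifford composed with an outcome-dependent Pauli rather than a pure Pauli frame, which is still polynomial bookkeeping.
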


Since circle graphs have polynomial rank-width (see Appendix \ref{app:rank_width_of_circle_graphs}), we then recover the fact that while polynomial rank-width is a necessary condition for efficient universality, it is not a sufficient one~\cite{Bravyi2007}.

\begin{corollary}[\cite{Bravyi2007}]\label{cor:rank-width_sufficient_condition}
Assuming $\BPP \neq \BQP$, polynomial rank-width is not a sufficient condition for efficient universality of a graph state resource for MBQC.
\end{corollary}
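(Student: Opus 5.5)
The plan is a proof by contradiction that chains \Cref{cor:MBQC_ECS_on_circle_graph_states} with the lower bound on the rank-width of circle graphs from Appendix~\ref{app:rank_width_of_circle_graphs}. First, I will make ``efficient universality'' precise in the sense of \cite{VandenNest2007}. A family $\mathcal{F}$ of graph states is efficiently universal if, for every poly-size quantum circuit family, the output state (or the corresponding grid graph state) can be obtained by LOCC, computable in classical polynomial time, from some member of $\mathcal{F}$ with polynomially many qubits. With this definition, MBQC on an efficiently universal family can realise any $\BQP$ computation with polynomial overhead.

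Next, I will fix the witness family: the comparability grids (or any circle graphs) of rank-width $\Omega(\sqrt{n})$ given by \Cref{corollary:circle_graphs_rank-width}. This family has polynomial rank-width, so it satisfies the hypothesis in question. If polynomial rank-width were sufficient for efficient universality, this family would be efficiently universal. Then any $\BQP$ language could be decided by an MBQC on these circle graph states, with poly-time classical pre- and post-processing and adaptive single-qubit measurements.

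By \Cref{cor:MBQC_ECS_on_circle_graph_states}, every such MBQC can be sampled classically in polynomial time, up to the approximation guaranteed there. Composing the classical reduction with this simulator therefore gives a $\BPP$ algorithm for every $\BQP$ language. This yields $\BQP \subseteq \BPP$, hence $\BPP=\BQP$, contradicting the assumption.

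The main obstacle is matching error notions. The simulator of \cite{Bravyi2007,Bravyi2022} samples approximately, so I must check that its total-variation error, say at most $1/\mathrm{poly}$ or a small constant, preserves bounded-error acceptance, for example by amplifying before simulating. I must also check that the LOCC conversion used in the universality definition is itself a sequence of adaptive single-qubit measurements. That holds for graph-state-to-grid reductions via Pauli measurements, which the simulator handles as part of the measurement pattern, just as in the argument preceding \Cref{cor:MBQC_ECS_on_circle_graph_states}.
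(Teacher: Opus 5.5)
Your proposal is correct and follows the paper's own argument. The paper combines the $\Omega(\sqrt{n})$ rank-width lower bound for comparability grids (which are circle graphs) from the appendix with the efficient classical simulability of MBQC on circle graph states, concluding that under $\BPP \neq \BQP$ this polynomial-rank-width family cannot be efficiently universal. Your extra care about amplifying against the simulator's sampling error, and about absorbing the LOCC conversion into the adaptive single-qubit measurement pattern, spells out the standard reduction that the paper leaves implicit.
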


\section{Counting LU-equivalent graph states is \#P-hard}

The problem of deciding whether two graph states are LC-equivalent belongs to \NP~by verifying a sequence of local complementations. However, because deciding the LC-equivalence of the underlying graphs is in \Poly~\cite{Bouchet1991}, the graph state problem actually falls into the stricter class of \Poly~\cite{VdnEfficientLC}.

By definition, this means that the problem of counting the number of graph states LC-equivalent to a given graph state is in \sharpP. Dahlberg, Helsen and Wehner proved that the problem is actually \sharpP-complete, i.e.,~it is one of the hardest problems in \sharpP \cite{dahlberg2020counting}. This was done by showing that the \sharpP-complete problem of counting the number of Eulerian tours of a 4-regular multigraph, can be reduced in polynomial time to computing the number of graph states LC-equivalent to a given circle graph state.

\begin{theorem}[\cite{dahlberg2020counting}]
    Counting the number of graph states LC-equivalent to a given graph state is \sharpP-complete, even when restricted to circle graph states.
\end{theorem}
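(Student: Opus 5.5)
The plan has two parts. Membership in \sharpP{} comes from the polynomial-time decidability of LC-equivalence. Hardness comes from a parsimonious, or at least multiplicatively controlled, reduction from counting Eulerian tours of 4-regular multigraphs, which is known to be \sharpP-complete.

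\emph{Membership.} Fix the labelled vertex set $V$ of the input graph $G$. A witness is a graph $H$ on $V$, of polynomial size. Checking whether $\ket{H}$ is LC-equivalent to $\ket{G}$ amounts to checking whether $H$ is locally equivalent to $G$ under local complementations. By Bouchet's algorithm \cite{Bouchet1991}, together with its graph-state counterpart \cite{VdnEfficientLC}, this takes polynomial time. Hence the number of accepting witnesses equals the number of graph states LC-equivalent to $\ket{G}$, and the problem lies in \sharpP.

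\emph{Hardness, setup.} Start from a connected 4-regular multigraph $M$ with $n$ vertices. Fix one Eulerian tour of $M$ and read it as a double occurrence word. The interlacement (alignment) graph of this word is a circle graph $C_M$ on vertex set $V(M)$, and it can be computed in polynomial time. Two facts from Kotzig and Bouchet drive the reduction:
\begin{itemize}
\item any two Eulerian tours of $M$ are related by a sequence of $\kappa$-transformations (reversing the segment between the two visits of a vertex $v$);
\item a $\kappa$-transformation at $v$ changes the interlacement graph exactly by the local complementation at $v$.
\end{itemize}
Consequently, the map $\Phi$ sending a tour to its interlacement graph lands inside the LC orbit of $C_M$. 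Conversely, every graph in that orbit is reached, since it is obtained from $C_M$ by local complementations, each of which can be realised by a $\kappa$-transformation. So $\Phi$ is surjective onto the set of graph states LC-equivalent to $\ket{C_M}$.

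\emph{Hardness, main step.} The remaining and hardest step is to control the fibres of $\Phi$. A double occurrence word determines its interlacement graph, but different tours, counted up to rotation and reversal, may yield the same labelled graph. My first attempt would be to show that, for the tour conventions used in the Eulerian-tour counting problem, each fibre has the same computable size, for instance a fixed power of $2$ or a quantity depending only on $n$. The number of tours would then equal the orbit size times an efficiently computable factor.

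If fibre sizes vary, which I expect happens when the interlacement graph has twins or pendant vertices (these correspond to small edge cuts of $M$), I would fall back on one of two fixes:
\begin{itemize}
\item preprocess $M$ by replacing each vertex with a fixed 4-regular gadget. This forces $M$ to be sufficiently connected (e.g.\ 3-edge-connected apart from trivial cuts), makes $\Phi$ injective, and multiplies the tour count by a known factor;
\item or use polynomial interpolation over several gadget sizes to recover the original count.
\end{itemize}
Either way, one oracle call to the orbit-counting problem on the circle graph $C_M$ (or its gadgetised version) determines the number of Eulerian tours of $M$. This yields \sharpP-hardness even when restricted to circle graph states, and combined with membership it gives \sharpP-completeness.
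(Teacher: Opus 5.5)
Your membership argument and the outline of your hardness reduction match what the paper says: an alignment graph $C_M$ of an Euler tour, Kotzig's theorem, and $\kappa$-transformations acting as local complementations, so that $\Phi$ is surjective onto the LC orbit. The paper itself only sketches this and cites Dahlberg, Helsen and Wehner. Your hardness half is not a proof, though. Surjectivity of $\Phi$ yields no reduction until you have an exact relation between the number of Euler tours of $M$ and the orbit size, with a correction factor computable in polynomial time. That relation is exactly the step you leave open.

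\emph{The first attempt fails.} The ratio of tours to orbit size depends on $M$, not only on $n$, and it need not be a power of $2$. Count tours as transition systems:
\begin{itemize}
\item The two-vertex multigraph with four parallel edges has $6$ Euler tours. All have alignment graph $K_2$, whose orbit is $\{K_2\}$.
\item The two-vertex multigraph with a loop at each vertex and a double edge between them has $4$ Euler tours. All have the edgeless alignment graph, whose orbit also has size $1$.
\end{itemize}
The fibres of $\Phi$ do share a common size, namely the number of local Clifford symmetries of $\ket{C_M}$ modulo Paulis. But that number varies with $M$, and you give no way to compute it.

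\emph{The gadget fallback is unsupported.} You specify no gadget and prove neither injectivity nor the multiplicative factor. The target you name is also the wrong one. In a $4$-regular multigraph every edge cut has even size, so ``$3$-edge-connected'' just means $4$-edge-connected. That still allows parallel edges, which become twins in $C_M$, and nontrivial $4$-edge cuts.

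What does give injectivity is $M$ simple with a prime alignment graph, because prime circle graphs have a unique chord diagram. Any vertex gadget, however, is attached by exactly four edges. The tour crosses that cut four times, and the four alternating arcs of the double occurrence word give a nontrivial split of the alignment graph. So gadgetising destroys primality, and injectivity would need a new argument. The interpolation alternative is not developed at all.

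To close the gap you need one of two things. Either prove an exact counting identity with an efficiently computable factor, or prove \sharpP-hardness of counting Euler tours on a class of $4$-regular multigraphs where $\Phi$ is provably a bijection. That is the real content of the cited reduction.
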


According to \Cref{thm:circle_graphs_closed}, for circle graph states, LC and LU-equivalence coincide, so the number of graph states LU-equivalent to a given circle graph state, is the same as the number of graph states LC-equivalent to it. Thus, counting the number of graph states LU-equivalent to a given graph state is \sharpP-hard.

\begin{corollary}\label{cor:counting}
    Counting the number of graph states LU-equivalent to a given graph state is \sharpP-hard, even when restricted to circle graph states.
\end{corollary}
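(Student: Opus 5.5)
The plan is a direct reduction from the \sharpP-complete LC-counting problem of \cite{dahlberg2020counting} to the LU-counting problem. We will argue that on circle graph inputs the two counting functions coincide exactly, so the identity map on inputs is a parsimonious (hence polynomial-time) reduction. Since the LC-version restricted to circle graph states is \sharpP-hard, so is the LU-version restricted to circle graph states, and a fortiori the unrestricted LU-version.

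\textbf{Step 1: the two orbits are equal as sets.} Fix a circle graph $C$. We want the set of graphs $G$ with $\ket G$ LU-equivalent to $\ket C$ to equal the set of graphs with $\ket G$ LC-equivalent to $\ket C$. Since LC-equivalence implies LU-equivalence, one inclusion is immediate. For the other, suppose $\ket G$ is LU-equivalent to $\ket C$. By \cite{claudet2024local}, $G$ is then reached from $C$ by a finite sequence of valid $r$-local complementations. We induct along this sequence. By \Cref{thm:circle_graphs_closed}, each intermediate graph is a circle graph, and each valid $r$-local complementation applied to it coincides with a sequence of ordinary local complementations. Hence $G$ is reachable from $C$ by local complementations alone, and $\ket G$ is LC-equivalent to $\ket C$.

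\textbf{Step 2: the counting conventions match.} We check that "number of graph states" is counted the same way in both problems: labeled graphs on the fixed vertex set $V(C)$, with the graph state identified up to a global phase. Under this convention, Step 1 shows the two counts are literally the same number for every circle graph input. Therefore any oracle for LU-counting answers LC-counting on circle graphs with no overhead, and the corollary follows.

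\textbf{Main obstacle.} The only real subtlety is the induction in Step 1. \Cref{thm:circle_graphs_closed} concerns a single $r$-local complementation applied to a circle graph, whereas an LU-equivalence may require a sequence of them, possibly with varying $r$. Closure of circle graphs under local complementation is what guarantees every intermediate graph stays a circle graph, so the theorem applies at each step. We also need that the characterization of LU-equivalence by sequences of valid $r$-local complementations from \cite{claudet2024local} applies to arbitrary graph states, which the paper already takes as given in \Cref{subsec:rlc}.
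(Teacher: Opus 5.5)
Your proposal is correct and follows the paper's argument. Theorem~\ref{thm:circle_graphs_closed} makes the LU- and LC-orbits of a circle graph state coincide, so the \sharpP-hardness of LC-counting on circle graph states from \cite{dahlberg2020counting} transfers through the identity reduction, and your explicit induction over a sequence of $r$-local complementations (using closure under local complementation at each step) just spells out a detail the paper leaves implicit in the statement of that theorem.
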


It is not known if the problem of deciding whether two graph states are LU-equivalent is in \NP, i.e.,~if the problem of counting the number of graph states LU-equivalent to a given graph state is in \sharpP. It is only known that the LU-equivalence of graph states can be decided in quasi-polynomial time, i.e.,~in time complexity $n^{O(\log(n))}$ where $n$ is the number of qubits \cite{claudet2025deciding}. Proving that the LU-equivalence of graph states is in \NP, would prove that counting the number of graph states LU-equivalent to a given graph state is \sharpP-complete.

\begin{acknowledgments}
The authors would like to thank Jim Geelen for many helpful discussions on circle graphs.
Further, the authors would like to express their gratitude to Anna Pappa, Isaac D. Smith, Robert Rau\ss{}endorf and Hans J. Briegel for useful discussions.
This research was funded in part by the Austrian Science Fund (FWF) [SFB BeyondC F7102, DOI: 10.55776/F71; WIT9503323, DOI: 10.55776/WIT9503323]. For open access purposes, the authors have applied a CC BY public copyright license to any author-accepted manuscript version arising from this submission.
This work was also supported by the European Union via the ERC Advanced Grant, QuantAI, (No. 101055129) and via the Quantum Internet Alliance (No. 101102140). The views and opinions expressed in this article are however those of the author(s) only and do not necessarily reflect those of the European Union or the European Research Council -- neither the European Union nor the granting authority can be held responsible for them.
\end{acknowledgments}

\bibliographystyle{quantum}
\bibliography{references}

@inproceedings{claudet2024local,
	archiveprefix = {arXiv},
	author = {Nathan Claudet and Simon Perdrix},
	booktitle = {Proceedings of the 42nd {S}ymposium on {T}heoretical {A}spects of  {C}omputer Science ({STACS} 2025)},
	eprint = {2409.20183},
	primaryclass = {quant-ph},
	title = {Local equivalence of stabilizer states: a graphical characterisation},
	year = {2025},
	doi ={10.4230/LIPIcs.STACS.2025.27}
}

@inproceedings{claudet2025deciding,
	archiveprefix = {arXiv},
	author = {Nathan Claudet and Simon Perdrix},
	booktitle = {Proceedings of the 52nd International Colloquium on Automata, Languages, and Programming (ICALP 2025)},
	eprint = {2502.06566},
	primaryclass = {quant-ph},
	title = {Deciding Local Unitary Equivalence of Graph States in Quasi-Polynomial Time},
	year = {2025},
	doi ={10.4230/LIPIcs.ICALP.2025.59}
}

@article{vandennestGraphicalDescriptionAction2004,
	title = {Graphical description of the action of local {Clifford} transformations on graph states},
	volume = {69},
	issn = {1050-2947, 1094-1622},	
	doi = {10.1103/PhysRevA.69.022316},
	language = {english},
	number = {2},
	urldate = {2022-06-24},
	journal = {Physical Review A},
	author = {Van den Nest, Maarten and Dehaene, Jeroen and De Moor, Bart},
	month = feb,
	year = {2004},
	pages = {022316},
    eprint={quant-ph/0308151}
}

@article{HowTransformGrapha,
	author = {Axel Dahlberg and Jonas Helsen and Stephanie Wehner},
	doi = {10.1088/2058-9565/aba763},
	eprint = {1805.05306},
	journal = {Quantum Science and Technology},
	month = {Sep},
	number = {4},
	pages = {045016},
	publisher = {IOP Publishing},
	title = {How to transform graph states using single-qubit operations: computational complexity and algorithms},
	volume = {5},
	year = {2020}
}

@article{transformingStatesOther,
	author = {Dahlberg, A.  and Wehner, S.},
	title = {Transforming graph states using single-qubit operations},
	journal = {Philosophical Transactions of the Royal Society A: Mathematical, Physical and Engineering Sciences},
	volume = {376},
	number = {2123},
	pages = {20170325},
	year = {2018},
	doi = {10.1098/rsta.2017.0325},	
	eprint = {1805.05305}
}

@article{jelinekRankwidthSquareGrid2010,
	title = {The rank-width of the square grid},
	volume = {158},
	issn = {0166218X},
	doi = {10.1016/j.dam.2009.02.007},
	language = {en},
	number = {7},
	urldate = {2023-10-26},
	journal = {Discrete Applied Mathematics},
	author = {Jelínek, Vít},
	month = apr,
	year = {2010},
	pages = {841--850},
}

@article{PRXQuantum.1.020325,
  title = {Anonymous Quantum Conference Key Agreement},
  author = {Hahn, Frederik and de Jong, Jarn and Pappa, Anna},
  journal = {PRX Quantum},
  volume = {1},
  issue = {2},
  pages = {020325},
  numpages = {12},
  year = {2020},
  month = {Dec},
  publisher = {American Physical Society},
  doi = {10.1103/PRXQuantum.1.020325},  
  eprint={2003.10186}
}

@article{murtaQuantumConferenceKey2020,
	title = {Quantum {Conference} {Key} {Agreement}: {A} {Review}},
	volume = {3},
	issn = {2511-9044},
	shorttitle = {Quantum {Conference} {Key} {Agreement}},	
	doi = {10.1002/qute.202000025},
	language = {en},
	number = {11},
	urldate = {2022-05-09},
	journal = {Advanced Quantum Technologies},
	author = {Murta, Gláucia and Grasselli, Federico and Kampermann, Hermann and Bruß, Dagmar},
	year = {2020},
	note = {\_eprint: https://onlinelibrary.wiley.com/doi/pdf/10.1002/qute.202000025},
	pages = {2000025},
    eprint={2003.10186}
}

@article{dahlbergComplexityVertexminorProblem2022,
	title = {The complexity of the vertex-minor problem},
	volume = {175},
	issn = {0020-0190},	
	doi = {10.1016/j.ipl.2021.106222},
	language = {en},
	urldate = {2022-05-15},
	journal = {Information Processing Letters},
	author = {Dahlberg, Axel and Helsen, Jonas and Wehner, Stephanie},
	month = apr,
	year = {2022},
	pages = {106222},
    eprint={1906.05689}
}

@misc{brijderIsotropicMatroidsII2016,
	title = {Isotropic matroids {II}: {Circle} graphs},
	shorttitle = {Isotropic matroids {II}},
	doi = {10.37236/5223},
	language = {en},
	publisher = {arXiv},
	author = {Brijder, Robert and Traldi, Lorenzo},
	month = oct,
	year = {2016},
    eprint={1504.04299},
    journal={The Electronic journal of Combinatorics},
    issue={Volume 23, Issue 4}
}

@article{geelenGridTheoremVertexminors2023,
	title = {The {Grid} {Theorem} for vertex-minors},
	volume = {158},
	issn = {00958956},	
	doi = {10.1016/j.jctb.2020.08.004},
	language = {english},
	urldate = {2023-10-10},
	journal = {Journal of Combinatorial Theory, Series B},
	author = {Geelen, Jim and Kwon, O-joung and McCarty, Rose and Wollan, Paul},
	month = jan,
	year = {2023},
	pages = {93--116},
    eprint={1909.08113}
}

@phdthesis{mccarty2021local,
  title={Local structure for vertex-minors},
  author={McCarty, Rose},
  school={University of Waterloo},
  type={PhD thesis},
  address={Waterloo, Ontario, Canada},
  year={2021},
  url={https://uwspace.uwaterloo.ca/items/1cfbfc52-2e30-44a4-b3fb-28493c3d94f0}  
}

@article{heinEntanglementGraphStates2006,
	author = {Hein, Marc and D{\"u}r, Wolfgang and Eisert, Jens and Raussendorf, Robert and Maarten Van den Nest and Briegel, Hans J.},
	doi = {10.3254/978-1-61499-018-5-115},
	eprint = {quant-ph/0602096},
	journal = {Quantum computers, algorithms and chaos},
	month = {Mar},
	title = {Entanglement in Graph States and its Applications},
	volume = {162},
	year = {2006}
}

@article{Sarvepalli2010PRA,
  title = {Local equivalence, surface-code states, and matroids},
  author = {Sarvepalli, Pradeep and Raussendorf, Robert},
  journal = {Phys. Rev. A},
  volume = {82},
  issue = {2},
  pages = {022304},
  numpages = {12},
  year = {2010},
  month = {Aug},
  publisher = {American Physical Society},
  doi = {10.1103/PhysRevA.82.022304},  
  eprint={0911.1571}
}

@article{Bravyi2007,
  title = {Measurement-based quantum computation with the toric code states},
  author = {Bravyi, Sergey and Raussendorf, Robert},
  journal = {Phys. Rev. A},
  volume = {76},
  issue = {2},
  pages = {022304},
  numpages = {10},
  year = {2007},
  month = {Aug},
  publisher = {American Physical Society},
  doi = {10.1103/PhysRevA.76.022304},  
  eprint={quant-ph/0610162}
}

@article{vandennestClassicalSimulationUniversality2007,
	title = {Classical simulation versus universality in measurement-based quantum computation},
	volume = {75},
	copyright = {http://link.aps.org/licenses/aps-default-license},
	issn = {1050-2947, 1094-1622},	
	doi = {10.1103/PhysRevA.75.012337},
	language = {en},
	number = {1},
	urldate = {2025-07-28},
	journal = {Physical Review A},
	author = {Van Den Nest, M. and Dür, W. and Vidal, G. and Briegel, H. J.},
	month = jan,
	year = {2007},
	pages = {012337},
    eprint={quant-ph/0608060}
}

@article{Bravyi2022,
  title = {How to Simulate Quantum Measurement without Computing Marginals},
  author = {Bravyi, Sergey and Gosset, David and Liu, Yinchen},
  journal = {Phys. Rev. Lett.},
  volume = {128},
  issue = {22},
  pages = {220503},
  numpages = {6},
  year = {2022},
  month = {Jun},
  publisher = {American Physical Society},
  doi = {10.1103/PhysRevLett.128.220503},
  eprint={2112.08499}
}

@article{ji2008lulcconjecturefalse,
author = {Ji, Zhengfeng and Chen, Jianxin and Wei, Zhaohui and Ying, Mingsheng},
title = {The {LU-LC} conjecture is false},
year = {2010},
issue_date = {Jan 2010},
publisher = {Rinton Press, Incorporated},
address = {Paramus, NJ},
volume = {10},
number = {1},
issn = {1533-7146},
journal = {Quantum Information and Computation},
month = {Jan},
pages = {97–108},
numpages = {12},
eprint={0709.1266},
doi = {QIC10.1-2-8.html}
}

@article{PhysRevA.71.062323,
  title = {Local unitary versus local {C}lifford equivalence of stabilizer states},
  author = {Van den Nest, Maarten and Dehaene, Jeroen and De Moor, Bart},
  journal = {Phys. Rev. A},
  volume = {71},
  issue = {6},
  pages = {062323},
  numpages = {7},
  year = {2005},
  month = {Jun},
  publisher = {American Physical Society},
  doi = {10.1103/PhysRevA.71.062323},
  eprint={quant-ph/0411115}
}

@article{tsimakuridzeGraphStatesLocal2017,
	author = {Nikoloz Tsimakuridze and Otfried G{\"u}hne},
	doi = {10.1088/1751-8121/aa67cd},
	journal = {Journal of Physics A: Mathematical and Theoretical},
	month = {Apr},
	number = {19},
	pages = {195302},
	publisher = {{IOP} Publishing},
	title = {Graph states and local unitary transformations beyond local {C}lifford operations},
	volume = {50},
	year = 2017,
	eprint={1611.06938}
}

@article{PRXQuantum.3.040306,
  title = {Secure Anonymous Conferencing in Quantum Networks},
  author = {Grasselli, Federico and Murta, Gl\'aucia and de Jong, Jarn and Hahn, Frederik and Bru\ss{}, Dagmar and Kampermann, Hermann and Pappa, Anna},
  journal = {PRX Quantum},
  volume = {3},
  issue = {4},
  pages = {040306},
  numpages = {23},
  year = {2022},
  month = {Oct},
  publisher = {American Physical Society},
  doi = {10.1103/PRXQuantum.3.040306},
  eprint={2111.05363}
}

@article{hillery1999quantum,
  title={Quantum secret sharing},
  author={Hillery, Mark and Bu{\v{z}}ek, Vladim{\'\i}r and Berthiaume, Andr{\'e}},
  journal={Physical Review A},
  volume={59},
  number={3},
  pages={1829},
  year={1999},
  publisher={APS},
  doi={10.1103/PhysRevA.59.1829},
  eprint={quant-ph/9806063}
}

@article{deJong2023anonymousconference,
  doi = {10.22331/q-2023-09-21-1117},  
  title = {Anonymous conference key agreement in linear quantum networks},
  author = {de Jong, Jarn and Hahn, Frederik and Eisert, Jens and Walk, Nathan and Pappa, Anna},
  journal = {{Quantum}},
  issn = {2521-327X},
  publisher = {{Verein zur F{\"{o}}rderung des Open Access Publizierens in den Quantenwissenschaften}},
  volume = {7},
  pages = {1117},
  month = sep,
  year = {2023},
  eprint={2205.09169}
}

@article{Goff2012,
  title = {Classical simulation of measurement-based quantum computation on higher-genus surface-code states},
  author = {Goff, Leonard and Raussendorf, Robert},
  journal = {Phys. Rev. A},
  volume = {86},
  issue = {4},
  pages = {042301},
  numpages = {18},
  year = {2012},
  month = {Oct},
  publisher = {American Physical Society},
  doi = {10.1103/PhysRevA.86.042301},  
  eprint={1201.6319}
}

@misc{harrison2025fermionicinsightsmeasurementbasedquantum,
      title={Fermionic Insights into Measurement-Based Quantum Computation: Circle Graph States Are Not Universal Resources},
      author={Brent Harrison and Vishnu Iyer and Ojas Parekh and Kevin Thompson and Andrew Zhao},
      year={2025},
      doi = {10.48550/arXiv.2510.05557},
      eprint={2510.05557},
      archivePrefix={arXiv},
      primaryClass={quant-ph}
}

@article{geelen2009circle,
  title={Circle graph obstructions under pivoting},
  author={Geelen, Jim and Oum, Sang-il},
  journal={Journal of Graph Theory},
  volume={61},
  number={1},
  pages={1--11},
  year={2009},
  publisher={Wiley Online Library},
  doi={10.1002/jgt.20363},
}

@article{bouchetCircleGraphObstructions1994,
	title = {Circle {Graph} {Obstructions}},
	volume = {60},
	issn = {0095-8956},
	doi = {10.1006/jctb.1994.1008},
	language = {en},
	number = {1},
	urldate = {2022-05-13},
	journal = {Journal of Combinatorial Theory, Series B},
	author = {Bouchet, A.},
	month = jan,
	year = {1994},
	pages = {107--144},
}

@article{Fraysseix1981,
title = {Local complementation and interlacement graphs},
journal = {Discrete Mathematics},
volume = {33},
number = {1},
pages = {29-35},
year = {1981},
issn = {0012-365X},
doi = {10.1016/0012-365X(81)90255-7},
author = {Hubert {de Fraysseix}},
}

@article{dahlberg2020counting,
  title={Counting single-qubit {C}lifford equivalent graph states is {\#}{P}-complete},
  author={Dahlberg, Axel and Helsen, Jonas and Wehner, Stephanie},
  journal={Journal of Mathematical Physics},
  volume={61},
  number={2},
  year={2020},
  publisher={AIP Publishing},
  doi={10.1063/1.5120591},
  eprint={1907.08024}
}

@article{Gross2009,
  title = {Most Quantum States Are Too Entangled To Be Useful As Computational Resources},
  author = {Gross, D. and Flammia, S. T. and Eisert, J.},
  journal = {Phys. Rev. Lett.},
  volume = {102},
  issue = {19},
  pages = {190501},
  numpages = {4},
  year = {2009},
  month = {May},
  publisher = {American Physical Society},
  doi = {10.1103/PhysRevLett.102.190501},  
  eprint={0810.4331}
}

@article{Ghosh2025,
  title = {Random Regular Graph States Are Complex at Almost Any Depth},
  author = {Ghosh, Soumik and Hangleiter, Dominik and Helsen, Jonas},
  journal = {PRX Quantum},
  volume = {6},
  issue = {4},
  pages = {040344},
  numpages = {40},
  year = {2025},
  month = {Nov},
  publisher = {American Physical Society},
  doi = {10.1103/52xz-3hpc},  
  eprint={2412.07058}
}

@inproceedings{Cautres2024,
	author = {Cautr\`{e}s, Maxime and Claudet, Nathan and Mhalla, Mehdi and Perdrix, Simon and Savin, Valentin and Thomass\'{e}, St\'{e}phan},
	title =	{Vertex-Minor Universal Graphs for Generating Entangled Quantum Subsystems},
	booktitle = {Proceedings of the 51st International Colloquium on Automata, Languages, and Programming (ICALP 2024)},
	doi = {10.4230/LIPIcs.ICALP.2024.36},
	year = {2024},
	eprint={2402.06260}
}

@article{VandenNest2007,
doi = {10.1088/1367-2630/9/6/204},
year = {2007},
month = {jun},
publisher = {},
volume = {9},
number = {6},
pages = {204},
author = {Van den Nest, M and Dür, W and Miyake, A and Briegel, H J},
title = {Fundamentals of universality in one-way quantum computation},
journal = {New Journal of Physics},
eprint={quant-ph/0702116}
}

@article{shi2006classical,
  title={Classical simulation of quantum many-body systems with a tree tensor network},
  author={Shi, Y-Y and Duan, L-M and Vidal, Guifre},
  journal={Physical Review A—Atomic, Molecular, and Optical Physics},
  volume={74},
  number={2},
  pages={022320},
  year={2006},
  publisher={APS},
  doi={10.1103/PhysRevA.74.022320},
  eprint={quant-ph/0511070}
}

@article{van2006universal,
  title={Universal resources for measurement-based quantum computation},
  author={Van den Nest, Maarten and Miyake, Akimasa and D{\"u}r, Wolfgang and Briegel, Hans J},
  journal={Physical review letters},
  volume={97},
  number={15},
  pages={150504},
  year={2006},
  publisher={APS}, 
  doi = {10.1103/PhysRevLett.97.150504},
  eprint={quant-ph/0604010}
}

@article{takou2025optimization,
  title={Optimization complexity and resource minimization of emitter-based photonic graph state generation protocols},
  author={Takou, Evangelia and Barnes, Edwin and Economou, Sophia E},
  journal={npj Quantum Information},
  volume={11},
  number={1},
  pages={108},
  year={2025},  
  publisher={Nature Publishing Group UK London},
  eprint={2407.15777},
  doi={10.1038/s41534-025-01056-3}
}

@article{azuma2015all,
  title={All-photonic quantum repeaters},
  author={Azuma, Koji and Tamaki, Kiyoshi and Lo, Hoi-Kwong},
  journal={Nature communications},
  volume={6},
  number={1},
  pages={6787},
  year={2015},  
  doi={10.1038/ncomms7787},
  publisher={Nature Publishing Group UK London},
  eprint={1309.7207}
}

@article{Liao2021,
  title = {Graph-state representation of the toric code},
  author = {Liao, Pengcheng and Feder, David L.},
  journal = {Phys. Rev. A},
  volume = {104},
  issue = {1},
  pages = {012432},
  numpages = {16},
  year = {2021},
  month = {Jul},
  publisher = {American Physical Society},
  doi = {10.1103/PhysRevA.104.012432},  
  eprint={2103.12268}
}

@article{Calderbank96,
  title = {Good quantum error-correcting codes exist},
  author = {Calderbank, A. R. and Shor, Peter W.},
  journal = {Phys. Rev. A},
  volume = {54},
  issue = {2},
  pages = {1098--1105},
  numpages = {0},
  year = {1996},
  month = {Aug},
  publisher = {American Physical Society},
  doi = {10.1103/PhysRevA.54.1098},  
  eprint={quant-ph/9512032}
}

@article{Steane96,
  title = {Error Correcting Codes in Quantum Theory},
  author = {Steane, A. M.},
  journal = {Phys. Rev. Lett.},
  volume = {77},
  issue = {5},
  pages = {793--797},
  numpages = {0},
  year = {1996},
  month = {Jul},
  publisher = {American Physical Society},
  doi = {10.1103/PhysRevLett.77.793}  
}

@article{chen2004multi,
  doi     = {10.26421/QIC7.8-1},
  title   = {Multi-partite quantum cryptographic protocols with noisy {GHZ} states},
  author  = {Chen, Kai and Lo, Hoi-Kwong},
  journal = {Quantum Information and Computation},
  volume  = {7},
  number  = {8},
  month   = {Nov},
  year    = {2007},
  eprint  = {quant-ph/0404133}
}

@misc{bauer2026quadratictensorsunificationclifford,
      title={Quadratic tensors as a unification of {C}lifford, {G}aussian, and free-fermion physics},
      author={Andreas Bauer and Seth Lloyd},
      year={2026},
      doi={10.48550/arXiv.2601.15396},
      eprint={2601.15396},
      archivePrefix={arXiv},
      primaryClass={quant-ph}
}

@article {rankWidthSurvey,
    AUTHOR = {Oum, Sang-il},
     TITLE = {Rank-width: algorithmic and structural results},
   JOURNAL = {Discrete Appl. Math.},
  FJOURNAL = {Discrete Applied Mathematics. The Journal of Combinatorial
              Algorithms, Informatics and Computational Sciences},
    VOLUME = {231},
      YEAR = {2017},
     PAGES = {15--24},
      ISSN = {0166-218X,1872-6771},
   MRCLASS = {05C85 (05C05 05C07 05C50 05C69)},
  MRNUMBER = {3695267},
MRREVIEWER = {David\ Burns},
       DOI = {10.1016/j.dam.2016.08.006},       
       eprint={1601.03800}
}

@article{graphMinors10,
    title = {Graph minors. {X}. {O}bstructions to tree-decomposition},
    journal = {Journal of Combinatorial Theory, Series B},
    volume = {52},
    number = {2},
    pages = {153-190},
    year = {1991},
    issn = {0095-8956},
    doi = {https://doi.org/10.1016/0095-8956(91)90061-N},    
    author = {Neil Robertson and P.D Seymour}
}

@article{robertson2010graph,
  title={Graph minors {XXIII}. {N}ash-{W}illiams' immersion conjecture.},
  author={Robertson, Neil and Seymour, Paul D},
  journal={J. Comb. Theory B},
  volume={100},
  number={2},
  pages={181--205},
  year={2010},
  url={https://web.math.princeton.edu/~pds/papers/GM23/GM23.pdf},
  doi={https://doi.org/10.1016/j.jctb.2009.07.003},
}

@article{branchwidthCycleMatroid,
    title = {The branchwidth of graphs and their cycle matroids},
    journal = {Journal of Combinatorial Theory, Series B},
    volume = {97},
    number = {5},
    pages = {681-692},
    year = {2007},
    issn = {0095-8956},
    doi = {https://doi.org/10.1016/j.jctb.2006.12.007},
    author = {Illya V. Hicks and Nolan B. McMurray}
}

@article{RWandVM,
    title = {Rank-width and vertex-minors},
    journal = {Journal of Combinatorial Theory, Series B},
    volume = {95},
    number = {1},
    pages = {79-100},
    year = {2005},
    issn = {0095-8956},
    doi = {10.1016/j.jctb.2005.03.003},    
    author = {Sang-il Oum}
}

@incollection {connectivityIsotropic,
    AUTHOR = {A. Bouchet{}},
     TITLE = {Connectivity of isotropic systems},
 BOOKTITLE = {Combinatorial {M}athematics: {P}roceedings of the {T}hird
              {I}nternational {C}onference ({N}ew {Y}ork, 1985)},
    VOLUME = {555},
     PAGES = {81--93},
 PUBLISHER = {New York Acad. Sci., New York},
      YEAR = {1989},
       DOI = {10.1111/j.1749-6632.1989.tb22439.x}       
}

@article{bramblesTreeCut,
    title = {On objects dual to tree-cut decompositions},
    journal = {Journal of Combinatorial Theory, Series B},
    volume = {157},
    pages = {401-428},
    year = {2022},
    issn = {0095-8956},
    doi = {https://doi.org/10.1016/j.jctb.2022.07.009},    
    author = {Łukasz Bożyk and Oscar Defrain and Karolina Okrasa and Michał Pilipczuk},
    eprint={2103.14667}
}

@article{bramblesSubmodular,
    title = {Submodular partition functions},
    journal = {Discrete Mathematics},
    volume = {309},
    number = {20},
    pages = {6000-6008},
    year = {2009},
    issn = {0012-365X},
    doi = {https://doi.org/10.1016/j.disc.2009.04.033},    
    author = {Omid Amini and Frédéric Mazoit and Nicolas Nisse and Stéphan Thomassé}
}

@inproceedings{certifyingLargeBW,
    author = {Oum, Sang-il and Seymour, Paul},
    title = {Certifying large branch-width},
    year = {2006},
    isbn = {0898716055},
    publisher = {Society for Industrial and Applied Mathematics},
    address = {USA},
    booktitle = {Proceedings of the Seventeenth Annual ACM-SIAM Symposium on Discrete Algorithm},
    pages = {810–813},
    numpages = {4},
    location = {Miami, Florida},
    doi = {10.1145/1109557.1109646},    
    series = {SODA '06}
}

@misc{ascoli2026graphsvertexminoruniversal,
      title={Almost all graphs are vertex-minor universal},
      author={Ruben Ascoli and Bryce Frederickson and Sarah Frederickson and Caleb McFarland and Logan Post},
      year={2026},
      doi = {10.48550/arXiv.2602.09049},
      eprint={2602.09049},
      archivePrefix={arXiv},
      primaryClass={quant-ph}
}

@article{geelen2014solving,
  title={Solving Rota’s conjecture},
  author={Geelen, Jim and Gerards, Bert and Whittle, Geoff},
  journal={Notices of the AMS},
  volume={61},
  number={7},
  pages={736--743},
  year={2014},  
  doi={http://dx.doi.org/10.1090/noti1139}
}

@article{lee2012rank,
  title={Rank-width of random graphs},
  author={Lee, Choongbum and Lee, Joonkyung and Oum, Sang-il},
  journal={Journal of Graph Theory},
  volume={70},
  number={3},
  pages={339--347},
  year={2012},
  publisher={Wiley Online Library},
  eprint={1001.0461},
  doi={10.1002/jgt.20620}
}

@article{mhalla2012graph,
	author = {Mehdi Mhalla and Simon Perdrix},
	journal = {International Journal of Unconventional Computing},
	number = {1-2},
	pages = {153--171},
	title = {Graph States, Pivot Minor, and Universality of ({X},{Z})-Measurements},
	volume = {9},
	year = {2013},
	doi = {10.48550/arXiv.1202.6551},
	eprint = {1202.6551},
	archivePrefix = {arXiv},
	primaryClass = {quant-ph}
}

@article{hein2004multiparty,
  title={Multiparty entanglement in graph states},
  author={Hein, Marc and Eisert, Jens and Briegel, Hans J},
  journal={Physical Review A—Atomic, Molecular, and Optical Physics},
  volume={69},
  number={6},
  pages={062311},
  year={2004},
  publisher={APS}, 
  doi={10.1103/PhysRevA.69.062311},
  eprint={quant-ph/0307130}
}

@article{jozsa2003role,
  title={On the role of entanglement in quantum-computational speed-up},
  author={Jozsa, Richard and Linden, Noah},
  journal={Proceedings of the Royal Society of London. Series A: Mathematical, Physical and Engineering Sciences},
  volume={459},
  number={2036},
  pages={2011--2032},
  year={2003},
  publisher={The Royal Society},
  doi={10.1098/rspa.2002.1097},
  eprint={quant-ph/0201143}
}

@article{kim2024vertex,
  title={Vertex-minors of graphs: A survey},
  author={Kim, Donggyu and Oum, Sang-il},
  journal={Discrete Applied Mathematics},
  volume={351},
  pages={54--73},
  year={2024},
  publisher={Elsevier},   
  doi={10.1016/j.dam.2024.03.011}
}

@article{Raussendorf2001,
  title = {A One-Way Quantum Computer},
  author = {Raussendorf, Robert and Briegel, Hans J.},
  journal = {Phys. Rev. Lett.},
  volume = {86},
  issue = {22},
  pages = {5188--5191},
  numpages = {0},
  year = {2001},
  month = {May},
  publisher = {American Physical Society},
  doi = {10.1103/PhysRevLett.86.5188},
  eprint = {quant-ph/0010033},
  archivePrefix = {arXiv},
  primaryClass = {quant-ph}
}

@phdthesis{claudet2025localequivalencesgraphstates,
      title={Local Equivalences of Graph States},
      author={Nathan Claudet},
      year={2025},
      school={Universit\'e de Lorraine},
      address={Nancy, France},
      doi={10.48550/arXiv.2511.22271},
      eprint={2511.22271},
      archivePrefix={arXiv},
      primaryClass={quant-ph}
}

@article{Oum2009,
author = {Oum, Sang-il},
title = {Excluding a bipartite circle graph from line graphs},
journal = {Journal of Graph Theory},
volume = {60},
number = {3},
pages = {183-203},
doi = {10.1002/jgt.20353},
year = {2009}
}

@misc{bravyi1998quantumcodeslatticeboundary,
      title={Quantum codes on a lattice with boundary},
      author={S. B. Bravyi and A. Yu. Kitaev},
      year={1998},
      doi={10.48550/arXiv.quant-ph/9811052},
      eprint={quant-ph/9811052},
      archivePrefix={arXiv},
      primaryClass={quant-ph}
}

@article{Kitaev2003,
title = {Fault-tolerant quantum computation by anyons},
journal = {Annals of Physics},
volume = {303},
number = {1},
pages = {2-30},
year = {2003},
issn = {0003-4916},
doi = {https://doi.org/10.1016/S0003-4916(02)00018-0},
author = {A.Yu. Kitaev},
eprint={quant-ph/9707021}
}

@article{Bouchet1991,
	author = {André Bouchet},
	day = {01},
	doi = {10.1007/BF01275668},
	issn = {1439-6912},
	journal = {Combinatorica},
	month = {Dec},
	number = {4},
	pages = {315-329},
	title = {An efficient algorithm to recognize locally equivalent graphs},	
	volume = {11},
	year = {1991}
}

@article{VdnEfficientLC,
	author = {Maarten Van den Nest and Dehaene, Jeroen and De Moor, Bart},
	doi = {10.1103/PhysRevA.70.034302},
	issue = {3},
	journal = {Physical Review A},
	month = {Sep},
	numpages = {3},
	pages = {034302},
	publisher = {American Physical Society},
	title = {Efficient algorithm to recognize the local {C}lifford equivalence of graph states},
	volume = {70},
	year = {2004},
	eprint={quant-ph/0405023}
}

@article{Verstraete2003,
  title = {Normal forms and entanglement measures for multipartite quantum states},
  author = {Verstraete, Frank and Dehaene, Jeroen and De Moor, Bart},
  journal = {Physical Review A},
  volume = {68},
  issue = {1},
  pages = {012103},
  numpages = {7},
  year = {2003},
  month = {Jul},
  publisher = {American Physical Society},
  doi = {10.1103/PhysRevA.68.012103},
  eprint={quant-ph/0105090}
}

@article{briegel2001,
  title = {Persistent Entanglement in Arrays of Interacting Particles},
  author = {Briegel, Hans J. and Raussendorf, Robert},
  journal = {Phys. Rev. Lett.},
  volume = {86},
  issue = {5},
  pages = {910--913},
  numpages = {0},
  year = {2001},
  month = {Jan},
  publisher = {American Physical Society},
  doi = {10.1103/PhysRevLett.86.910},
  eprint={quant-ph/0004051}
}

@article{raussendorf2003measurementbased,
  title = {Measurement-based quantum computation on cluster states},
  author = {Raussendorf, Robert and Browne, Daniel E. and Briegel, Hans J.},
  journal = {Phys. Rev. A},
  volume = {68},
  issue = {2},
  pages = {022312},
  numpages = {32},
  year = {2003},
  month = {Aug},
  publisher = {American Physical Society},
  doi = {10.1103/PhysRevA.68.022312},
  eprint={quant-ph/0301052}
}

@article{raussendorf2019computationally,
  title = {Computationally Universal Phase of Quantum Matter},
  author = {Raussendorf, Robert and Okay, Cihan and Wang, Dong-Sheng and Stephen, David T. and Nautrup, Hendrik Poulsen},
  journal = {Phys. Rev. Lett.},
  volume = {122},
  issue = {9},
  pages = {090501},
  numpages = {5},
  year = {2019},
  month = {Mar},
  publisher = {American Physical Society},
  doi = {10.1103/PhysRevLett.122.090501},
  eprint={1803.00095}
}

@article{stephen2019subsystem,
  doi = {10.22331/q-2019-05-20-142},  
  title = {Subsystem symmetries, quantum cellular automata, and computational phases of quantum matter},
  author = {Stephen, David T. and Nautrup, Hendrik Poulsen and Bermejo-Vega, Juani and Eisert, Jens and Raussendorf, Robert},
  journal = {{Quantum}},
  issn = {2521-327X},
  publisher = {{Verein zur F{\"{o}}rderung des Open Access Publizierens in den Quantenwissenschaften}},
  volume = {3},
  pages = {142},
  month = may,
  year = {2019},
  eprint={1806.08780}
}

@article{nautrup2024measurement,
  title = {Measurement-based quantum computation from Clifford quantum cellular automata},
  author = {Poulsen Nautrup, Hendrik and Briegel, Hans J.},
  journal = {Phys. Rev. A},
  volume = {110},
  issue = {6},
  pages = {062617},
  numpages = {14},
  year = {2024},
  month = {Dec},
  publisher = {American Physical Society},
  doi = {10.1103/PhysRevA.110.062617},
  eprint={2312.13185}
}

@article{schlingemann2001quantum,
  title = {Quantum error-correcting codes associated with graphs},
  author = {Schlingemann, D. and Werner, R. F.},
  journal = {Phys. Rev. A},
  volume = {65},
  issue = {1},
  pages = {012308},
  numpages = {8},
  year = {2001},
  month = {Dec},
  publisher = {American Physical Society},
  doi = {10.1103/PhysRevA.65.012308},
  eprint={quant-ph/0012111}
}

@book{nielsen2010quantum,
  title={Quantum computation and quantum information},
  author={Nielsen, Michael A and Chuang, Isaac L},
  year={2010},
  isbn={978-1-107-00217-3},
  publisher={Cambridge university press},
  doi={10.1017/CBO9780511976667}
}

@article{gottesman1998heisenberg,
  title={The {H}eisenberg Representation of Quantum Computers},
  author={Gottesman, Daniel},
  journal={Proceedings of the XXII International Colloquium on Group Theoretical Methods in Physics (Group22)},
  editor={Corney, S. P. and Delbourgo, R. and Jarvis, P. D.},
  year={1999},
  publisher={International Press},
  address={Cambridge, MA},
  doi={10.48550/arXiv.quant-ph/9807006},
  eprint={quant-ph/9807006},
  archivePrefix={arXiv},
  primaryClass={quant-ph}
}

\appendix

\section{Rank-width of circle graphs}\label{app:rank_width_of_circle_graphs}

We note that it is possible to prove that there exist circle graphs with $n^2$ vertices and rank-width at least $cn$ (for a suitably chosen constant $c>0$) in a different, much less direct fashion using matroids.

First of all, it is known that the $n \times n$ grid has branch-width $n$; see~\cite{graphMinors10}. Moreover, for every graph with a cycle that is not a loop (such as the $n \times n$ grid), the branch-width of its cycle matroid is equal to its branch-width as a graph~\cite{branchwidthCycleMatroid}. Bouchet~\cite{connectivityIsotropic} proved a theorem about connectivity functions which implies that the branch-width of a binary matroid (such as the cycle matroid of a graph) is exactly one more than the rank-width of its fundamental graph. (For a more direct reference, see \cite[Proposition~3.1]{RWandVM}.) Finally, any fundamental graph of a planar graph (or equivalently, of its cycle matroid) is a circle graph by a theorem of de Fraysseix~\cite{Fraysseix1981}. Thus any fundamental graph of the $n \times n$ grid is a circle graph with $\mathcal{O}(n^2)$ vertices and rank-width at least $n-1$
(see e.g.~Figure~\ref{fig:planar_code_states_are_bipartite_circle_graphs} for the chord diagram associated to one of the fundamental graphs of the $4 \times 4$ grid).

Here, however, we opt to give a direct proof so that we do not need to rely on so many different results, and so that we can understand the comparability grid itself, which has been studied before in relation to rank-width~\cite{geelenGridTheoremVertexminors2023}.

For positive integers $m$ and $n$, the $m \times n$ \textit{comparability grid} is the simple graph with vertex set $\{1, 2, \ldots, m\} \times \{1, 2, \ldots, n\}$ where two distinct vertices $(i,j)$ and $(i',j')$ are joined by an edge if either both $i \leq i'$ and $j \leq j'$, or both $i \geq i'$ and $j \geq j'$.
(Note that simple graphs do not have loops, so for $(i,j)=(i',j')$ there is no edge.)

Thus a comparability grid is an example of a \textit{comparability graph}, i.e., a graph whose edges represent the comparable elements in some partially ordered set. Moreover, comparability grids are circle graphs; see \Cref{fig:compGrid} for the $3 \times 3$ comparability grid and one of its chord diagrams.

\begin{figure}
  \centering
  \includegraphics[width=0.8\columnwidth]{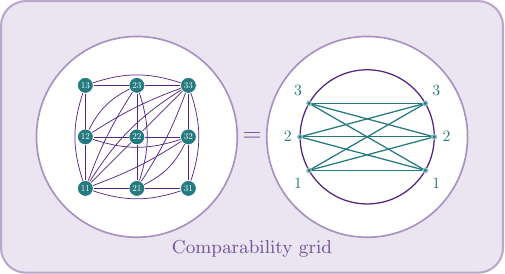}
  \caption{The $3 \times 3$ comparability grid (left) and a corresponding chord diagram (right).
  In the chord diagram, each comparability grid vertex $ij=(i, j)$ is represented by a chord between a point labeled $i$ in the left half and by a point labeled $j$ in the right half.
  Note that chords are considered to include their endpoints and thus chords with a shared endpoint intersect. }
  \label{fig:compGrid}
\end{figure}

To define rank-width, we fix some additional notation. Given a subset of vertices $X\subseteq V$ in a graph $G$, we write $r(X)$ for the rank over the binary field $\text{GF}(2)$ of the submatrix of the adjacency matrix with rows $X$ and columns~$V\setminus X$. We call $r(X)$ the \textit{cut-rank} of $X$. Note that $r(X) = r(V\setminus X)$ since the adjacency matrix is symmetric.

Next, a \textit{ternary tree} is a tree graph with only vertices of degree 1 or 3. The degree 1 vertices are called its \textit{leaves}. Every ternary tree $T$ with $n$ leaves gives rise to a \textit{rank-decomposition} of an $n$-vertex graph $G$ whose vertex set bijectively corresponds to the leaves of the ternary tree. Removing an edge $e$ from the tree $T$ cuts the tree into two subtrees and thus partitions both the leaf set of $T$ and thereby the vertex set of $G$ into two parts $(X_e, Y_e)$.

This partition $(X_e, Y_e)$ in turn assigns a \textit{width} to the edge $e$, which is the cut-rank $r(X_e)=r(Y_e)$. The \textit{width of the rank-decomposition} is then the maximum width of any of the edges of the tree. Finally, for any graph $G$, the \textit{rank-width} $rw(G)$ is the minimum width over all rank-decompositions of $G$, that is
\begin{equation*}\label{eq:rank-width}
    rw(G) = \min_{T} \max_{e \in E(T)} r(X_e).
\end{equation*}Thus a graph has small rank-width if, intuitively, it decomposes away along partitions of low cut-rank.

In order to prove that the comparability grid has large rank-width, we require the following lemma.

\begin{lemma}
\label{lem:smallerThanExpected}
Let $X$ contain at most half of the vertices of the $n \times n$ comparability grid and have cut-rank $r(X) \leq n/4$. Then $X$ contains strictly less than one-third of the vertices.
\end{lemma}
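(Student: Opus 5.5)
The plan is to argue by contraposition. Let $Y=V\setminus X$ and suppose $n^2/3\le |X|\le n^2/2$, so that also $|Y|\ge n^2/2$. We aim to show $r(X)>n/4$. The useful structure is that every row $R_i=\{(i,j)\}$ and every column $C_j=\{(i,j)\}$ of the comparability grid is a clique. Call a line (row or column) \emph{$X$-full} if it lies in $X$, \emph{$Y$-full} if it lies in $Y$, and \emph{mixed} otherwise.

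\textbf{Step 1 (many mixed lines).} This is a purely combinatorial step. An $X$-full row and a $Y$-full column would share a vertex, so they cannot coexist, and similarly with $X$ and $Y$ swapped. If an $X$-full row and a $Y$-full row both exist, then all $n$ columns are mixed. Otherwise all full rows have one type, all full columns have one type, and these two types agree whenever both kinds of full line exist. Suppose the full lines are all of type $X$ and there are $m$ mixed rows. Then $|Y|\le mn$, and since $|Y|\ge n^2/2$ we get $m\ge n/2$. Suppose instead the full lines are all of type $Y$. Then $|X|\le mn$, so $m\ge n/3$. Applying the same argument to columns, in every case we obtain at least $n/3$ mixed rows, or at least $n/3$ mixed columns. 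By the symmetry $(i,j)\mapsto(j,i)$ we may assume there are at least $n/3$ mixed rows.

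\textbf{Step 2 (rank from mixed rows).} In each mixed row $i$ pick a \emph{transition} $c_i$, meaning that $(i,c_i)$ and $(i,c_i+1)$ lie on different sides of the cut. We then consider the difference vectors $N((i,c_i))\,\Delta\, N((i,c_i+1))$. Restricted to vertices outside row $i$, this set is exactly
\[
\{(i',c_i+1): i'<i\}\cup\{(i',c_i): i'>i\},
\]
a very sparse "cross" pattern. The goal is to show that the rows of the cut matrix $A[X,Y]$ indexed by the $X$-endpoints of these transitions, with columns chosen among the $Y$-endpoints of other transitions, contain a triangular (or permutation) submatrix of size greater than $n/4$ over $\mathrm{GF}(2)$. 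To do this, order the mixed rows by $i$ and look at the entries $[a_i\le b_{i'}]$ for $i<i'$ and $[a_i\ge b_{i'}]$ for $i>i'$, where $a_i$ and $b_i$ denote the columns of the $X$-endpoint and $Y$-endpoint, so $\{a_i,b_i\}=\{c_i,c_i+1\}$. We then row-reduce using consecutive pairs, in the spirit of the cross pattern above, so as to make the matrix triangular.

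\textbf{Main obstacle.} The difficulty is keeping the constant: from $k\ge n/3$ mixed rows we must extract rank strictly greater than $n/4$, so we can afford to lose only a quarter of the rows. The first thing I would try is to pass to every other mixed row, or to split according to whether $a_i<b_i$ or $a_i>b_i$, whichever class is larger. Within one class the entries $[a_i\le b_{i'}]$ become monotone in a way that should yield an upper-triangular matrix with ones on the diagonal (since $x_i\sim y_i$ within a row). The concern is that this halving costs too much, since it only gives about $k/2\ge n/6$. If so, I would combine mixed rows with mixed columns: when the full lines are of type $X$, Step 1 actually gives $m\ge n/2$, and halving then gives more than $n/4$. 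For the type-$Y$ case, I would sharpen the count to show that the number of mixed rows plus mixed columns is at least $2n/3$ and extract rank from both families simultaneously. That bookkeeping is where I expect the real work to lie.
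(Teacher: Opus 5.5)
\textbf{There is a genuine gap.} Your Step~1 is correct, but it is the easy part. The whole content of the lemma lies in Step~2, and there the proposal gives no argument. Worse, the mechanism you sketch cannot work.

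Take $X=\{(i,j): j<i\}$. It has $n(n-1)/2$ vertices, so it satisfies $n^2/3\le |X|\le n^2/2$ for $n\ge 3$.
\begin{itemize}
\item Every row $R_i$ with $i\ge 2$ is mixed, with a unique transition. Its $X$-endpoint is $(i,i-1)$ and its $Y$-endpoint is $(i,i)$, so all mixed rows lie in the class $a_i<b_i$.
\item The mixed columns $C_j$ have transitions with exactly the same kinds of endpoints, $(j+1,j)\in X$ and $(j,j)\in Y$.
\item $(i,i-1)$ is adjacent to $(i',i')$ for all $i,i'$. For $i\le i'$ both coordinates of $(i,i-1)$ are at most those of $(i',i')$; for $i>i'$ both are at least those.
\end{itemize}
Hence every submatrix of $A[X,Y]$ whose rows are $X$-endpoints and whose columns are $Y$-endpoints of transitions is all-ones, of rank~$1$. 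This holds whether you split into classes, pass to every other row, or combine rows with columns. In particular, the claimed upper-triangularity within a class is false: two rows with $c_1=1$, $c_2=2$ already give a $2\times 2$ all-ones block.

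The cross-pattern row reduction does not rescue this. $N((i,c_i))\,\Delta\,N((i,c_i+1))$ adds the neighbourhood of a $Y$-vertex to that of an $X$-vertex, and the former is not a row of $A[X,Y]$. The true cut-rank of this $X$ is large (by the lemma itself); it is simply not witnessed by transitions.

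\textbf{The missing idea.} You need to place all $X$-vertices of the rank certificate on a single line, so that their adjacency to the chosen $Y$-vertices depends on one coordinate only. You also need a maximality choice that avoids losing constant factors, which is exactly where your halving strategy breaks down. The paper proceeds as follows (in your notation).
\begin{itemize}
\item Pick a line maximizing $|L\cap X|$, say $R_i$, with $k\ge n/3$ vertices $(i,j)\in X$.
\item Call such a $j$ \emph{left-holed} if $C_j$ contains a $Y$-vertex $v_j=(i'',j)$ with $i''<i$. Then $v_j\sim(i,j')$ if and only if $j\le j'$, independently of $i''$. These pairs therefore form a unit-diagonal triangular submatrix, so there are at most $r(X)\le n/4$ left-holed $j$. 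Likewise there are at most $n/4$ right-holed ones.
\item Since $k>n/4$, some $C_j$ has $(i,j)$ and everything with smaller first coordinate in $X$. Another has $(i,j)$ and everything with larger first coordinate in $X$. So some line has more than $n/2$ vertices in $X$.
\item By maximality, $k>n/2$, which exceeds the number of left-holed plus right-holed $j$. So some $C_j$ lies entirely in $X$, which forces $k=n$.
\item Now at least $3n/4$ of the $C_j$ are not left-holed and at least $3n/4$ are not right-holed. This gives $|X|\ge n+\tfrac34 n(n-1)>n^2/2$, a contradiction.
\end{itemize}
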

\begin{proof}
    We consider the vertices $(i,j)$ of the comparability grid to be arranged in a grid as indexed by $i$ and $j$, with $(1,1)$ at the bottom-left and $(n,n)$ at the top-right. We view $i$ as the $x$ coordinate and $j$ as the $y$ coordinate (see Figure~\ref{fig:compGrid}).
    Thus we refer to the \textit{rows} and \textit{columns}, where, for instance, the \textit{last} or \textit{$n$-th} column is $\{(n, j): j \in \{1,2,\ldots, n\}\}$.

    Choose a row or column with as many vertices in $X$ as possible. By symmetry between the rows and columns (this symmetry may be most clear by referring to either \Cref{fig:compGrid}), we may assume that it is a column, say column $i$. Let $k$ be the number of vertices in column $i$ which are in $X$. Going for a contradiction, suppose that $X$ contains at least a third of the vertices of the graph. So at least a third of column $i$ is in $X$, and thus $k \geq n/3$.

    \begin{figure}
  \centering
  \includegraphics[width=0.644\columnwidth]{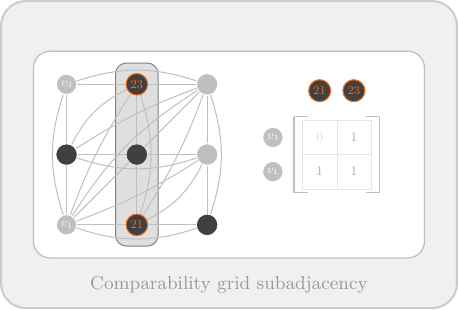}
  \caption{The comparability grid vertices in $X$ are shown in a \textcolor{black!75}{darker shade of gray} compared to \textcolor{gray!50}{the remaining vertices}.
  The square submatrix of the adjacency matrix whose rows are the left-holes $v_j$ and whose columns are the corresponding vertices $(i,j) \in X$ is shown on the right.
  }
  \label{fig:compGrid_holed}
\end{figure}

    We restrict ourselves to considering only the $k$ rows whose $i$-th vertex is in $X$. We say that such a row $j$ is \textit{left-holed} (respectively, \textit{right-holed}) if it contains a vertex $v_j$ to the left (respectively, right) of column $i$ which is not in $X$. We call this vertex $v_j$ a \textit{left-hole} (respectively, a \textit{right-hole}). We claim that at most $n/{4}$ rows are left-holed. To see this, consider the square submatrix of the adjacency matrix whose rows are the left-holes $v_j$ and whose columns are the corresponding vertices $(i,j) \in X$ (see \Cref{fig:compGrid_holed}).
    Since $v_j$ is adjacent to the $j$-th through $n$-th vertices in column $i$, this matrix is triangular with $1$s on the diagonal. Thus it has full rank. This proves the desired claim. A similar proof shows that at most $n/{4}$ rows are right-holed.

    Since $k\geq n/3>n/4$, not all of these $k$ rows are left-holed and not all are right-holed. So one of these $k$ rows has all vertices in column $i$ and further left in $X$, and one has all vertices in column $i$ and further right in $X$. Thus there exists a row with strictly more than half of its vertices in $X$. So in fact $k> n/2$. Thus there exists a row that is neither left-holed nor right-holed. So in fact $k=n$.

    Now, at least three-quarters of the vertices to the left of column $i$ are in $X$, and likewise for the right. This contradicts the fact that $X$ contains at most half the vertices.
\end{proof}

\begin{figure*}
  \centering
  \includegraphics[width=0.8\textwidth]{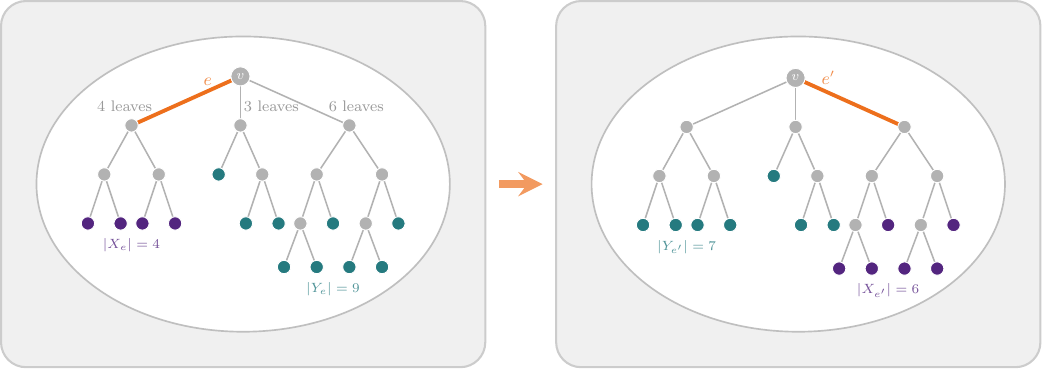}
  \caption{Finding a balanced partition.
  The degree-3 vertex $v$ adjacent to the edge $e$ on the $Y_e$ side has three incident branches, each with four, three, and six leaves, respectively.
  In the left box, the branch facing towards $X_e$ has strictly less than one third of the leaves ($\nicefrac{4}{13}<\nicefrac{1}{3}$).
  So at least one of the two branches facing towards $Y_e$ has strictly more than one third of the leaves; here, the one with six.
  Labeling the first edge of this branch as $e'$ gives a more balanced partition shown in the right box.
  }
  \label{fig:balanced_partition}
\end{figure*}

Now we can prove the main lemma.

\begin{lemma}
\label{lem:comparabilityGrid}
The rank-width of the $n \times n$ comparability grid is at least $n/{4}$.
\end{lemma}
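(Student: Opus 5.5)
We argue by contradiction, using the standard balanced-edge argument together with \Cref{lem:smallerThanExpected}. Suppose the $n \times n$ comparability grid $G$ has a rank-decomposition $T$ of width strictly less than $n/4$. Then every tree edge $e$ has cut-rank $r(X_e) = r(Y_e) < n/4 \le n/4$. We aim to find an edge $e$ of $T$ whose smaller side contains at least one third of the $N = n^2$ vertices. We then apply \Cref{lem:smallerThanExpected} to that smaller side. It contains at most half the vertices and has cut-rank at most $n/4$, so the lemma says it contains strictly less than a third of the vertices, which is a contradiction. (We may assume $n$ is large enough that $T$ has internal vertices; for $N\le 2$ the claim $n/4 < 1$ is handled separately, since the rank-width of a graph with an edge is at least $1 > n/4$ when $n\le 3$, and the $1\times 1$ case is trivial.)

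The main step is the balanced-partition argument illustrated in \Cref{fig:balanced_partition}. We orient each edge $e$ of $T$ so that it points toward the side $Y_e$ containing more than half of the leaves; if the sides are exactly equal, $e$ is already a balanced edge and we are done. Start at any edge and walk along orientations. Suppose the current edge $e$ has smaller side $X_e$ with $|X_e| < N/3$. Let $v$ be the endpoint of $e$ on the $Y_e$ side. Since $T$ is ternary and $|Y_e| > N/2 \ge 1$, $v$ is an internal vertex of degree 3. Its two other branches together contain more than $2N/3$ leaves, so one of them, entered through an edge $e'$, contains more than $N/3$ leaves. If that branch contains at most $N/2$ leaves, $e'$ is the balanced edge we want. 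Otherwise we move to $e'$: its far side is strictly smaller than $Y_e$, since it omits $X_e$ and the other branch. This makes the process strictly decrease a positive integer, so it terminates at an edge whose smaller side has size in $[N/3, N/2]$.

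Finally, we check the hypotheses of \Cref{lem:smallerThanExpected} for $X = X_{e'}$: it contains at most half of the vertices, and its cut-rank is the width of $e'$, which is less than $n/4$. The lemma forces $|X| < N/3$, contradicting $|X|\ge N/3$. Hence every rank-decomposition has width at least $n/4$, as claimed. I expect the only delicate point to be the bookkeeping in the walk (strict decrease and the degree-3 guarantee at $v$); the rank-theoretic content is entirely contained in \Cref{lem:smallerThanExpected}.
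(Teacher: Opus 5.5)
Your proof is correct and takes essentially the same route as the paper: you find an edge of the rank-decomposition whose smaller side contains between one third and one half of the $n^2$ vertices, then invoke \Cref{lem:smallerThanExpected} for the contradiction. The only difference is presentational: the paper picks a most balanced edge and uses the same degree-3 branching step to show it already lies in this range, whereas you reach such an edge by an explicit walk with a strictly decreasing potential.
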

\begin{proof}
    Going for a contradiction, suppose not. Let $T$ be a rank-decomposition of $G$ of width at most $n/4$. Let $e$ be an edge of $T$ so that $(X_e, Y_e)$ is as balanced as possible (i.e., so that $X_e$ and $Y_e$ are as close in size as possible).

    We may assume that $|X_e| \leq |Y_e|$ by simply switching the labels if  otherwise.

    Now, if $Y_e$ contained more than two-thirds of the vertices, then we could find an edge $e'$ corresponding to a more balanced partition $(X_{e'}, Y_{e'})$ by moving closer to $Y_e$. For an example of this, see \Cref{fig:balanced_partition}.

    Note that this is always possible because $T$ is a ternary tree. The degree-3 vertex $v$ adjacent to the edge $e$ on the $Y_e$ side has three incident branches partitioning the leaves of $T$. The branch facing towards $X_e$ has strictly less than one third of the leaves. So at least one of the two branches facing towards $Y_e$ has strictly more than one third of the leaves. Labeling the first edge of this branch as $e'$ gives a more balanced partition.
    So in fact $Y_e$ contains at most two thirds of the vertices and thus $X_e$ contains at least one third of the vertices (but also at most half since $Y_e$ is the larger side), i.e.~$n^2/3 \leq |X_e| \leq n^2/2$.

    But since $r(X_e)\leq n/4$, we can also apply \Cref{lem:smallerThanExpected} and conclude $|X_e| < n^2/3$, yielding a contradiction.
\end{proof}

Since comparability grids are a subset of circle graphs, there exist circle graphs with rank-width $\mathcal{O}(m)$ on $m^2$ vertices.

\begin{corollary}
\label{corollary:circle_graphs_rank-width}
Any upper bound on the rank-width of circle graphs must grow at least with $\mathcal{O}(\sqrt{n})$ in the number of vertices $n$.
\end{corollary}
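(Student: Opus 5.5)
This corollary should follow directly from \Cref{lem:comparabilityGrid}. The one point that needs an argument is that comparability grids are circle graphs. The appendix asserts this and illustrates it in \Cref{fig:compGrid}, but I would give the general construction explicitly. Place $2n$ points on a circle. On the left arc, put the labels $1,2,\ldots,n$ in order from top to bottom. On the right arc, put the labels $1,2,\ldots,n$ in order from bottom to top. Represent vertex $(i,j)$ by a chord from a point near the left label $i$ to a point near the right label $j$. Endpoints sharing a label are perturbed within a small arc, ordered so that chords sharing a label cross, matching the convention in \Cref{fig:compGrid}.

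Next I would check the intersection rule. Two chords with left coordinates $i,i'$ and right coordinates $j,j'$ cross exactly when their endpoints interleave around the circle. Because of the chosen orientations, this happens exactly when $(i-i')(j-j')\geq 0$, i.e.\ when the pairs are comparable. This is the step where one must be careful. The obstacle is the boundary cases $i=i'$ or $j=j'$, where ties must be broken consistently. I would handle them by perturbing the endpoints within each label's arc so that the chord order on the left arc and the chord order on the right arc are compatible, for instance lexicographic in $(i,j)$ on the left and correspondingly on the right, so that any two chords sharing a coordinate cross.

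With that established, the $n\times n$ comparability grid is a circle graph on $N=n^2$ vertices. By \Cref{lem:comparabilityGrid}, its rank-width is at least $n/4=\sqrt{N}/4$. This gives an infinite family of $N$-vertex circle graphs, one for each perfect square $N$, with rank-width $\Omega(\sqrt N)$.

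It follows that no function $f(N)=o(\sqrt N)$ can upper bound the rank-width of all $N$-vertex circle graphs, which is the statement of the corollary. As a cross-check, I would note that the matroid argument sketched at the start of the appendix gives the same $\Omega(\sqrt N)$ bound via fundamental graphs of grids.
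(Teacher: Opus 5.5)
Your proposal is correct and follows the paper's route: the paper notes that comparability grids are circle graphs (illustrated only by \Cref{fig:compGrid}) and applies \Cref{lem:comparabilityGrid} to get rank-width at least $n/4=\sqrt{N}/4$ on $N=n^2$ vertices. Your explicit chord construction fills in the step the paper only illustrates. The two arcs get opposite orientations, the left endpoint order is lexicographic in $(i,j)$ and the right endpoint order lexicographic in $(j,i)$, which is how ``correspondingly'' should be read. Two chords then cross exactly when both endpoint orders agree, i.e.\ exactly when the pairs are comparable.
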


We suspect that there actually exist circle graphs with rank-width at least $cn$ (for some constant $c>0$) on only $n$ vertices. The idea is to take chord diagrams which come from drawing the vertices of a $3$-regular expander on a circle and making the edges chords; this type of construction was discussed in the workshop \footnote{Potential constructions were discussed at the \href{https://bwag25.bici.events/home}{Bertinoro Workshop on Algorithms and Graphs} in 2025 following a question of Rutger Campbell.}. If we consider all graphs, not just circle graphs, then it is known that almost all graphs exhibit maximal rank-width, i.e. rank-width at least $\lceil n/3 \rceil - O(1)$~\cite{lee2012rank}.

It seems interesting to find more exact methods for lower-bounding the rank-width of a graph. Many other width parameters on graphs (such as treewidth~\cite{graphMinors10} and certain variants of tree-cut width~\cite{bramblesTreeCut}) have a ``dual'' object called \textit{brambles}. Brambles play a similar role to tangles (see~\cite{graphMinors10}), except they naturally have much more compact representations than tangles. It would be nice to find a natural notion of brambles for rank-width. Determining the exact rank-width of the $n \times n$ comparability grid could be a nice test case for the methods. We note that an abstract version of brambles has been considered in~\cite{bramblesSubmodular}, and it is known that polynomially-sized certificates of large rank-width do exist~\cite{certifyingLargeBW}.

\end{document}